\documentclass[aps,twocolumn,floatfix,nofootinbib]{revtex4-2}

\usepackage{amsmath,amsthm,amsfonts,amssymb,times,bbm,graphicx,bbold,mathrsfs}
\usepackage[hidelinks, colorlinks=true, allcolors=blue]{hyperref}
\usepackage{float}

\usepackage{multirow}

\newtheorem{theorem}{Theorem}

\newtheorem*{proposition*}{Proposition}
\newtheorem{lemma}[theorem]{Lemma}
\newtheorem*{lemma*}{Lemma}

\newtheorem{definition}[theorem]{Definition}

\newtheorem{result}[theorem]{Result}

\newtheorem{condition}{Condition}

\newcommand{\RR}{\mathbb{R}}
\newcommand{\ZZ}{\mathbb{Z}}
\newcommand{\NN}{\mathbb{N}}
\newcommand{\CC}{\mathbb{C}}

\renewcommand{\S}{\mathcal{S}}
\newcommand{\E}{\mathcal{E}}
\renewcommand{\P}{\mathcal{P}}
\newcommand{\D}{\mathcal{D}}

\renewcommand{\L}{\mathcal{L}}
\renewcommand{\O}{\mathcal{O}}
\newcommand{\B}{\mathcal{B}}
\newcommand{\M}{\mathcal{M}}

\newcommand{\Em}{\mathscr M}
\newcommand{\Kay}{\mathscr K}
\newcommand{\Cee}{\mathscr C}
\newcommand{\Gee}{\mathscr G}

\newcommand{\Ess}{\mathscr S}

\newcommand{\1}{\underline{1}}
\newcommand{\Id}{\mathbb{1}}

\DeclareMathOperator{\tr}{tr}
\DeclareMathOperator{\Span}{span}

\DeclareMathOperator{\img}{img}

\DeclareMathOperator{\conv}{conv}
\DeclareMathOperator{\cone}{cone}

\usepackage{xcolor}

\usepackage{algorithm}
\usepackage{algorithmicx}
\usepackage{algpseudocode}
\algrenewcommand\algorithmiccomment[1]{\hfill {#1}}

\usepackage{stmaryrd}

\begin{document}


\title{Probabilistic theories stable under teleportation}

\author{Lionel J.\ Dmello}
\email{ldmello@thp.uni-koeln.de} 
\author{David Gross} 
\email{david.gross@thp.uni-koeln.de}
\affiliation{ Institute for Theoretical Physics, University of Cologne, Germany } 
\date{Mar 22, 2026}

\begin{abstract}
	A long-standing problem in the foundations of quantum mechanics is to identify a physical principle that explains why algebraically maximal violations of Bell inequalities can generally not be achieved in Nature.
	One recently proposed approach considers \emph{iterated Bell tests},
	where a Bell test is performed on a state that has undergone several rounds of entanglement swapping.
	Obtaining large violations in this scenario is more demanding, because it requires a theory to have both highly entangled states and highly entangled measurements.
	It has been conjectured that the maximal quantum mechanical Clauser-Horne-Shimony-Holt (CHSH)-value of $2\sqrt 2$ might be optimal for any probabilistic theory which, like quantum mechanics, maintains its CHSH-value after an arbitrary number of rounds of entanglement swapping.
	However, in a previous paper, we have exhibited a first example of a probabilistic theory that can sustain a CHSH value of $4$ in this setting.
	In this work, further investigating this property, we give a classification of all general probabilistic theories (GPTs) whose CHSH value is stable in the above sense.
	The problem reduces to a representation-theoretic condition that allows for exactly seven solutions.
	The GPT from our previous work showed some counter-intuitive features, e.g.\ that the local state space had a higher dimension than seemed necessary to realize CHSH tests.
	The classification shows that this is necessarily so.
	Along the way, we generalize the concept of \emph{self-testing} to GPTs.
\end{abstract}

\maketitle 


\section{Introduction}

The search for physical principles that single out quantum mechanics (QM) among descriptions of Nature is a long-standing problem in the foundations of QM.
The seminal work of John S.\ Bell \cite{bell1964einstein}, and the subsequent Clauser-Horne-Shimony-Holt (CHSH) experiment \cite{clauser1969proposed}, established \emph{contextuality} as a principle that rules out classical descriptions of Nature in favour of QM.
So far, realizations of the Clauser-Horne-Shimony-Holt (CHSH) experiment \cite{aspect1982experimental, zeilinger1998violation, rowe2001experimental} have shown that Nature is at least as contextual as predicted by QM, i.e., they achieve a CHSH value of $2\sqrt2$, which is also the maximal value for any quantum strategy (Tsirelson's bound \cite{cirel1980quantum}).
What is still lacking is a physical principle that rules out theories (like boxworld \cite{popescu1994quantum}) which predict a larger CHSH value than QM while still being consistent with the assumptions of the CHSH experiment.
There is a rich body of research exploring this problem \cite{winter2009informationcausality, miklin2021infocausality, fritz2013localorth, hardy2001quantumtheoryreasonableaxioms, chiribella2017quantum, wilce2018royalroad, brassard2006limit, linden2007quantum, navascues2010glance}.
Many such works are based on the following observation: Violations of Bell inequalities predicted by QM tend to be, in general, strictly smaller than the algebraically allowed maximum.
Thus it is conceivable that understanding the origin of this discrepancy leads to a physical principle that explains it.

Weilenmann and Colbeck have recently proposed the \emph{adaptive CHSH game} \cite{weilenmann2020self, weilenmann2020towards} as a task that may explain this discrepancy.
They conjecture that the biggest CHSH value allowed in the adaptive CHSH game is $2\sqrt2$, and that the same value is recovered in the usual CHSH experiment because it is a special case of the former.
The adaptive CHSH game involves performing entanglement swapping before the CHSH test.
Thus it probes the strength of correlations supported by both bipartite states and bipartite effects.
The (convex) sets describing states and measurements are dual to each other.
Thus, expanding one set, in order to reach stronger correlations, inevitably diminishes the other.
In this regard, QM strikes a particular balance, which is why it appears reasonable to conjecture that QM would be optimal for the adaptive CHSH game.

In fact, QM has a stronger property: There exist quantum strategies for which the CHSH value is preserved not only after one round of entanglement swapping, but also after an arbitrary number of rounds.
This property is highly constraining, in the sense that many known examples of general probabilistic theories (GPTs) fail to exhibit it.
However, in our previous work \cite{dmello2024entanglement} we constructed a GPT called oblate stabilizer theory (OST), which not only has this feature, but can also achieve a CHSH value of $4$.
Thus, in view of OST, it is unclear exactly what kind of constraints this property imposes on GPTs.
In this work we deduce these constraints and as a result classify GPTs with this property.

This classification also sheds light on various other aspects such as the resonablilty of the assumption of local tomography and the existence of redundant degrees of freedom in OST.

In the process of obtaining the main result we also generalize the notion of \emph{self-testing} to GPTs.


\subsection{General Probabilistic Theories}\label{subsec:GPTs}

\emph{General probabilistic theories} (GPTs) \cite{segal1947postulates,ludwig1967attempt2,ludwig1968attempt3,dahn1968attempt,stolz1969attempt,stolz1971attempt,davies1970operational,barrett2007information, plavala2023general} is a framework that allows us to describe correlations that are compatible with operationally motivated assumptions (e.g.\ no-signaling), but are not necessarily realized in QM.
The goal of such a framework is to study the observations of QM in a broader setting in order to extract the underlying physical principles governing them.

In this work we adopt the GPT formalism (with minor modifications) from our previous work (Ref.~\cite{dmello2024entanglement}). 
Here we provide an overview of the formalism and the modifications. 
For a more detailed description, please refer to Sec.~I and Sec.~II of \cite{dmello2024entanglement}.
For a general overview of the GPT formalism we refer the reader to \cite{plavala2023general}.
We assume that a GPT is specified by 
\begin{itemize}
    \item A finite-dimensional real vector space $V$.
    \item A collection of closed, pointed convex cones\footnote{A convex cone is said to be closed if it contains all it's limit points with respect to some topology. 
    In the present case, since we work with finite dimensional real vector spaces, we take the standard topology. 
    A convex cone $P$ is said to be pointed if $P \cap -P = \{ 0 \}$.},
    \begin{align*}
        \forall n \in \NN, \P^{(n)} \subseteq V^{\otimes n},
    \end{align*}
    where $V^{\otimes n}$ is the standard $n$-fold tensor product of $V$.
    \item A distinguished element $\1$ in the relative interior\footnote{The relative interior of a convex cone is the interior of the cone with the ambient space taken to be the affine hull of the cone.} of $\P^{(1)}$.
\end{itemize}
We call $\P^{(n)}$ the \emph{effect cones} and $\1$ the \emph{unit effect}.
The effects model measurement outcomes, in the sense that the effect $e$ encodes the probability of obtaining the outcome labelled by $e$.
The unit effect in particular, models the trivial measurement, which encodes the fact that in every experiment some outcome must occur.
Therefore, a set of effects $\{e_i\}_{i = 1}^m \subseteq \P^{(n)}$ defines a measurement if, 
\begin{align*}
    \sum_{i = 1}^m e_i = \1^{\otimes n}. 
\end{align*}

In this formalism, \emph{states} are viewed as linear functionals on effects.
To this end let $(\P^{(n)})' \subseteq (V^{\otimes n})^*$ be the polar dual\footnote{The polar dual $P'$ of a convex cone $P$ is the set of all non-negative linear functionals on $P$.} of $\P^{(n)}$.
We choose a set of pointed cones $\forall n \in \NN : \D^{(n)} \subseteq (\P^{(n)})'$, that are generating\footnote{A convex cone is said to be generating if it spans the ambient space.} on $\Span(\P^{(n)})^*$, called the \emph{state cones}.
The convex sets
\begin{align*}
    \S^{(n)} := \{ \rho \in \D^{(n)} \ | \ \rho(\1^{\otimes n}) = 1 \},
\end{align*}
are called the \emph{state spaces} of the GPT.
The state spaces are compact sets (App.~\ref{app:app_to_formalism}), a fact that will be important later (Sec.~\ref{subsec:prop_of_tel_semigrp}).

The first deviation from the formalism in \cite{dmello2024entanglement} is that we do not require that the state and effect cones are generating with respect to the ambient spaces $V^{\otimes n}$ and their duals.
This is in order to be able to discuss the notion of local tomography in the latter part of this work (Sec.~\ref{subsec:no_all_pancakes}).
More precisely, in this paper, we will look at cases where $\Span(\P^{(2)}) = V^{\otimes 2}$ but $\Span(\P^{(1)}) \subsetneq V$.
Further, it may seem peculiar that we demand that $\D^{(n)}$ be a generating cone on $\Span(\P^{(n)})^*$.
This is all to say that, on a mathematical level, all we are doing is embedding cones defined as per \cite{dmello2024entanglement,plavala2023general} into (possibly) larger ambient spaces. 

Let $e \in \P^{(n)}$, define the \emph{negation} of $e$ as 
\begin{align*}
    \neg e := \1^{\otimes n} - e.
\end{align*}
The (convex) set of all those elements of $\P^{(n)}$ whose negation is also in $\P^{(n)}$ is called the \emph{effect space} $\E^{(n)}$, i.e., 
\begin{align*}
    \E^{(n)} := \P^{(n)} \cap \neg \P^{(n)}.
\end{align*}
This definition implies that 
\begin{align*}
    e \in \E^{(n)} \Leftrightarrow e \in \P^{(n)} \ \text{and} \ \neg e \in \P^{(n)},
\end{align*}
from which it directly follows that 
\begin{align*}
    e \in \E^{(n)} \Leftrightarrow \forall \rho \in \S^{(n)}, 0 \leq \rho(e) \leq 1.
\end{align*}
That is, the pairing between any effect $e$ and state $\rho$ is not only positive, but also bounded above by $1$.
As such it will be interpreted as the probability of obtaining the outcome corresponding $e$ given we prepared the state $\rho$.
Additionally, every element of $\P^{(n)}$ with this property is in $\E^{(n)}$.


\subsubsection{Constructing a GPT from a set of states and effects}\label{subsubsec:generating_a_GPT}

It is often the case that we seek to determine whether it is \emph{possible} to achieve certain correlations in some experimental scenario.
Addressing this question typically only requires us to specify states and measurements which, if employed in the experiment, reproduce the correlations in question.
Therefore, it is convenient to have an algorithmic procedure that takes a collection of states and effects and turns it into a GPT that contains them.

Algorithm~1 of \cite{dmello2024entanglement} provides a straight-forward ``closure construction'' that addresses exactly this problem.
In this work, in contrast to the treatment in \cite{dmello2024entanglement}, we do no assume that GPTs are invariant under permutation of subsystems.
Thus, when we refer to the \emph{GPT closure} of sets of states and effects, we mean the output of Algorithm~1 of \cite{dmello2024entanglement} with the symmetrization step omitted.

The focus of this work is the iterated CHSH game introduced in \cite{dmello2024entanglement}, which is a generalization of the adaptive CHSH game of \cite{weilenmann2020self,weilenmann2020towards}.
It can be stated by combining only bipartite states and effects.
For this reason, we will typically consider only GPTs generated by taking the GPT closure of bipartite objects.


\subsection{The CHSH scenario}\label{subsec:CHSH_scenario}

The CHSH scenario consists of two space-like-separated parties (usually called Alice and Bob), each of whom possesses a two-setting two-outcome measurement machine.
These two parties perform an experiment where, in every round, they choose one of two measurement settings and measure a shared bipartite state.

Given a GPT, define \emph{an instance of the CHSH scenario} as a choice of four effects -- $e_i, f_j \in \E^{(1)},\ i,j \in \{0,1\}$ -- and a bipartite state $\rho \in \S^{(2)}$.
The effect $e_0$ specifies the two-outcome measurement corresponding to setting $0$ of Alice, i.e., the effects $e_0, \neg e_0$ model the two outcomes respectively.
Analogously $e_1$ for setting $1$ of Alice, and, $f_0$ and $f_1$ for Bob.
$\rho$ models the bipartite state shared by Alice and Bob.

Let $A_i, B_j,\ i,j \in \{0,1\}$ be the $\pm 1$ valued observables corresponding to the measurement settings of Alice and Bob respectively.
In terms of the effects they are defined as follows:
\begin{align}\label{eqn:correlators_from_effects}
    A_i := e_i - \neg e_i,
    \quad
    B_j := f_j - \neg f_j.
\end{align}
The observable of interest for the CHSH scenario is given by 
\begin{align*}
    \B :=  A_0 \otimes B_0 + A_0 \otimes B_1 + A_1 \otimes B_0 - A_1 \otimes B_1.
\end{align*}
We call this the \emph{standard CHSH observable}.

Given an instance -- $\rho, e_i, f_j$ -- of the CHSH scenario, the pairing $\rho(\B)$ is called the \emph{CHSH value of the instance}.
The \emph{CHSH value of the GPT} is then the supremum over the CHSH values of all possible instances from the GPT.

The \emph{CHSH inequality} is the following Bell inequality
\begin{align*}
    \rho(\B) \leq 2,
\end{align*} 
respected by all classical theories.
A GPT is said to \emph{violate CHSH} if there exist at least one instance that violates the CHSH inequality.


\subsection{The teleportation semigroup}\label{subsec:telp_semi_grp}

Here we provide a short summary of the formalization of teleportation in the GPT framework (Ref.~\cite{barnum2008teleportation, chiribella2017quantum}).
In analogy to QM, teleportation refers to the process where we perform a joint bipartite measurement on one half of a bipartite state and a local state, producing a local state as a result (as depicted in Fig.\ref{fig:teleportation-def}).
For the mathematical definition, it is useful to keep in mind that we here regard states and linear functionals on effects.
Given a bipartite state $\rho$, a bipartite measurement $\{ \phi_k \}_k$, and a local state $\sigma$, conditioned on obtaining the outcome $k$, we get the state $\sigma_k$ defined by the following map:
\begin{align*}
    \sigma_k : e \mapsto \langle \phi_k \otimes e, \sigma \otimes \rho \rangle,
\end{align*}
for all local effects $e$.

\begin{figure}[b]
    \centering
    \includegraphics[trim={0 3.5cm 0 2.5cm}, clip, height=25mm, keepaspectratio]{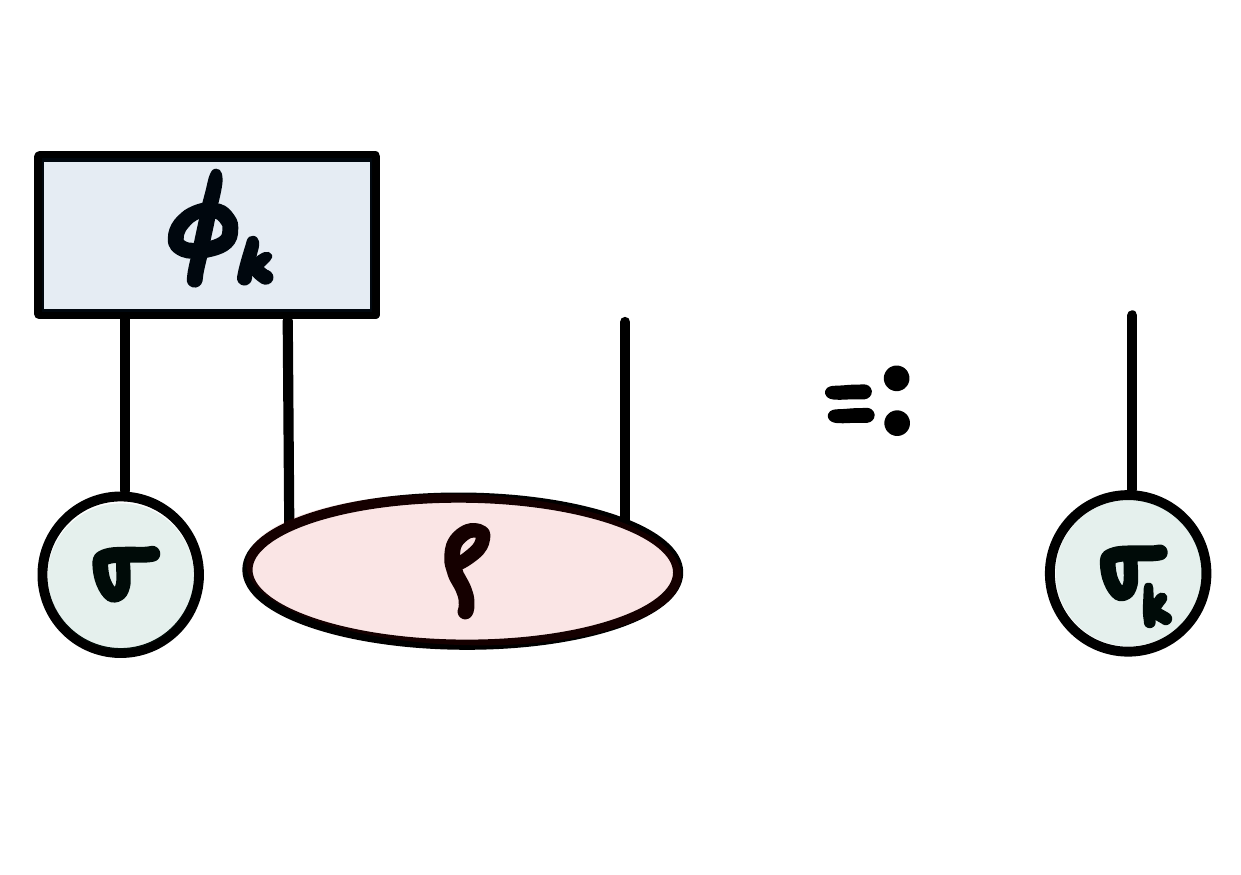}
    \caption{Diagrammatic representation of teleportation.
    The joint measurement $\{ \phi_k \}_k$ on a local state $\sigma$ and one-half of a bipartite state $\rho$, results in a local state $\sigma_k$, when conditioned on the measurement outcome corresponding to $\phi_k$.}
    \label{fig:teleportation-def}
\end{figure}

An equivalent way to look at the same situation is as follows:
Bipartite states are isomorphic to bilinear forms on local effects, and vice versa.
Thus with every bipartite state we can associate a map $\hat\rho$ from the effect cone to the state cone, defined by
\begin{align}\label{eqn:rhohat}
    \hat\rho : e \mapsto \rho(e, \cdot).
\end{align}
In other words, $\hat\rho$ sends every effect $e$ to the functional $\rho(e, \cdot)$.
Similarly, bipartite effects can also be viewed as maps from the state cone to effect cone 
\begin{align*}
    \hat\phi_k : \sigma \mapsto \phi_k(\sigma, \cdot).
\end{align*}

We use the above equivalence liberally in this work, i.e., we take $\rho$ and $\hat\rho$ to mean one and the same thing, operationally speaking.
Under this isomorphism pairings between bipartite states and bipartite effects become the trace of a product of maps.
There are two ways to do this, the usual pairing (Fig.~\ref{fig:pairings}~(a)) is given by
\begin{align*}
    \rho(\phi_k) = \tr(\hat\rho^T\hat\phi_k),
\end{align*}
where $\hat\rho^T$ is the map corresponding to the bilinear form obtained by switching the tensor factors of $\rho$ (upon choosing a basis, this corresponds to the matrix transpose).
It is also possible to stagger the states and effects and ``close the loop'' in order to pair them (Fig.~\ref{fig:pairings}~(b)).
In this case the pairing is given by $\tr(\hat\rho\hat\phi_k)$, and, depending on the situation, it might not have an analog to $\rho(\phi_k)$.

A special case of Fig.~\ref{fig:pairings}~(a) is when the effect of product type.
Given local effects, $e$ for Alice and $f$ for Bob, $\widehat{e \otimes f}$ denotes the map 
\begin{align*}
    \widehat{e \otimes f} : \sigma \mapsto \sigma(e) f.
\end{align*}
In terms of this map, we can write the pairing $\rho(e,f)$ as
\begin{align*}
    \rho(e,f) = \tr(\hat\rho^T \widehat{e \otimes f}) = \tr(\hat\rho\widehat{f \otimes e}) = \hat\rho(e)(f).
\end{align*} 
The last term in the above equation denotes the following: First apply the map $\hat\rho$ to the effect $e$ of Alice to obtain the state $\hat\rho(e)$ on Bob.
Then evaluate $\hat\rho(e)$ on the effect $f$ of Bob.

\begin{figure}[t]
    \centering
    (a)
    \includegraphics[trim={4cm 0 4.5cm 0}, clip, height=25mm, keepaspectratio]{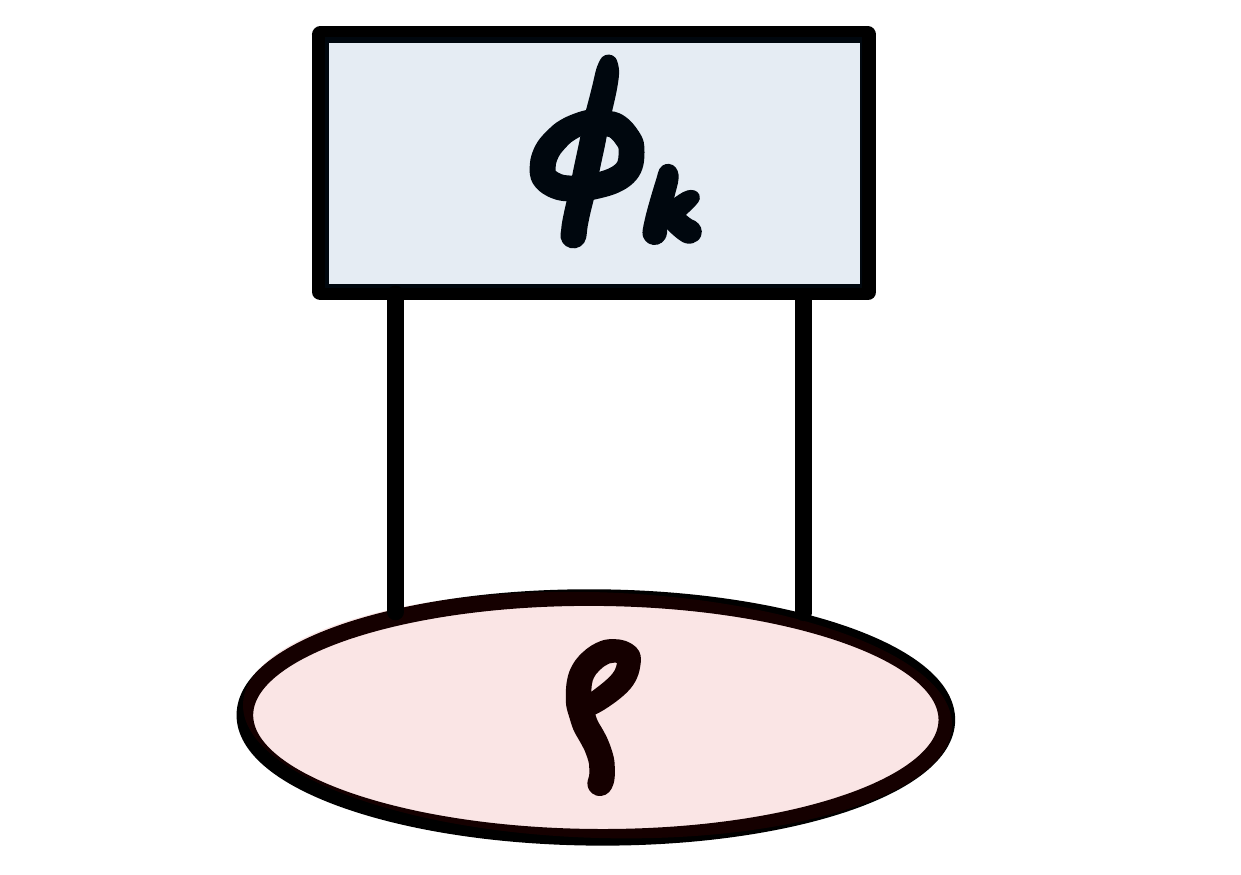}
    \hspace{2em}
    (b)
    \includegraphics[trim={0 1cm 0 1cm}, clip, height=25mm, keepaspectratio]{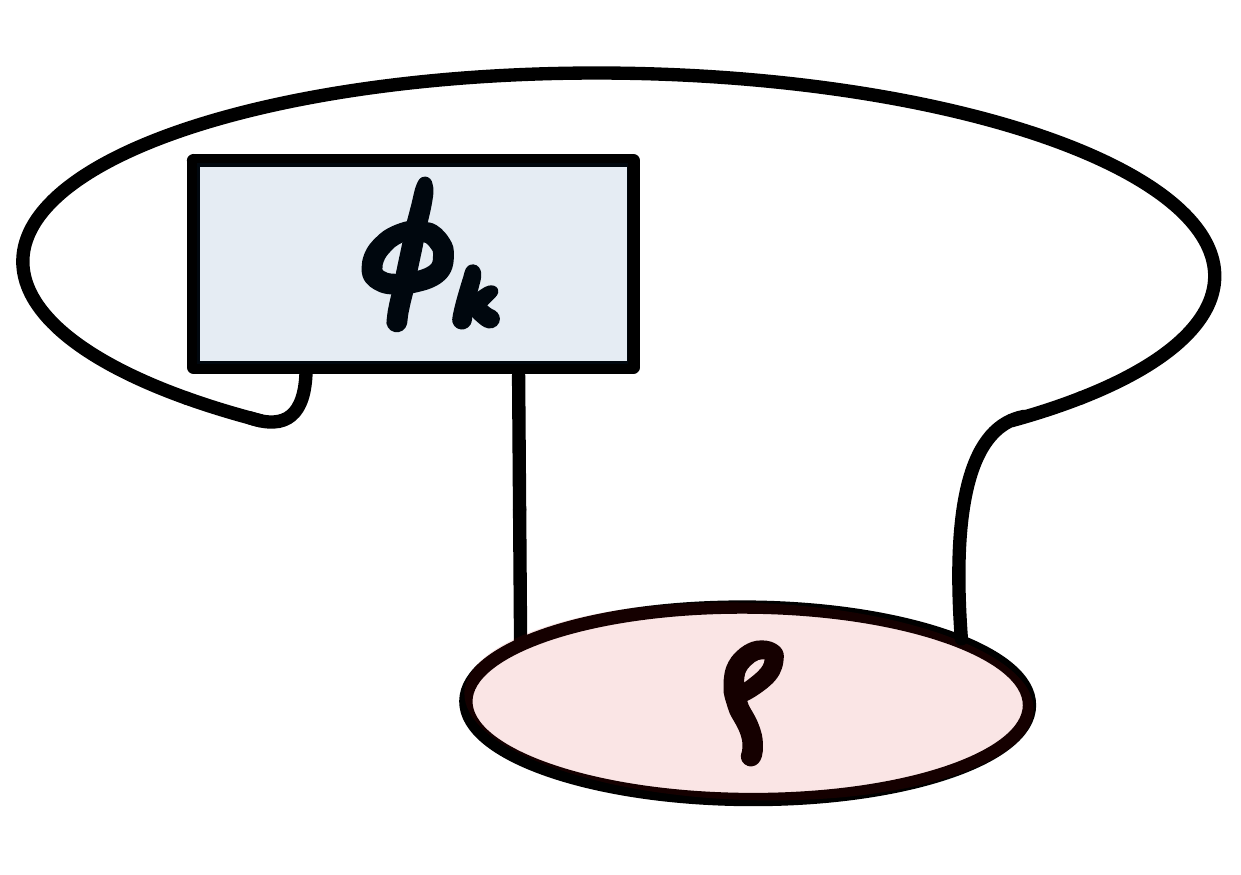}
    \caption{The possible ways to pair a bipartite state with a bipartite effect. 
    (a) Depicts the usual pairing $\rho(\phi_k)$. 
    In order to translate it into the trace pairing, we have to introduce a transpose to one of the two maps: $\tr(\hat\rho^T\hat\phi_k)$.
    (b) Depicts the other way to pair the two, namely by ``closing the loop''. 
    In order to translate this pairing into a trace we do not have to introduce a transpose: $\tr(\hat\rho\hat\phi_k)$.
    Such pairings arise when the type of subsystems, which the ``legs'' of the states and effects correspond to, are restricted (see Sec.~\ref{subsec:elim_dofs}).}
    \label{fig:pairings}
\end{figure}

Concatenating the maps associated with bipartite states and effects yields $\hat\rho  \hat\phi_k$, a map from states to (sub-normalized) states.
The set of maps $\{ \hat\rho\hat\phi_k \}_k$ generates a semigroup.
Consider an element of this semigroup, e.g., $(\hat\rho\hat\phi_{k_1})  \cdots  (\hat\rho\hat\phi_{k_N})$.
Physically, the image of a local state under this map corresponds to
teleporting the local state $N$ times and obtaining the list of outcomes $(k_1, \cdots, k_N)$.
Mathematically, the resulting local state is sub-normalized by a factor equal to probability of obtaining the outcomes $(k_1, \cdots, k_N)$.
Given a GPT, we call the set of all maps generated in this way the \emph{teleportation semigroup} of the GPT.
The teleportation semigroup also captures the phenomenon of \emph{entanglement swapping}, which can be described as the application of elements of the semigroup to one half of a bipartite state.


\subsection{Outline}

In terms of the terminology introduced in Sec.~\ref{subsec:GPTs}, \ref{subsec:CHSH_scenario}, and \ref{subsec:telp_semi_grp}, 
the conjecture of \cite{weilenmann2020self} stems from the observation that large CHSH values and the existence of teleportation protocols are at odds, in the sense that teleportation works to decrease the strength of CHSH correlations.
Indeed, there are many GPTs with very strong CHSH correlations where the set of states accessible after (iterated) teleportation can produce only classical correlations, e.g., boxworld \cite{popescu1994quantum, barrett2007information, plavala2023general}, composite GPTs \cite{dmello2024entanglement}, even-polygon theories \cite{janotta2011limits,weilenmann2020towards, weilenmann2020self}.
However, in our previous work \cite{dmello2024entanglement} we presented a first example of a GPT (OST) that retains a CHSH value of $4$, indefinitely, under entanglement swapping.

In view of these results, one can ask: What are the constraints imposed on GPTs by the requirement that they maintain a high CHSH value even after repeated teleportation?
Here we answer this question under mild technical assumptions. 
In Section~\ref{sec:GPT_self-testing} we develop a tool required to arrive at our results, namely GPT self-testing, which is a generalization of quantum self-testing to the GPT framework.
In Section~\ref{sec:classification}, we show that if certain correlations (c.f.\ Sec.~\ref{subsec:sufficient_cond_for_classification}) are maintained in the iterated CHSH scenario, then there exists an \emph{effective GPT} (c.f.\ Sec.~\ref{subsec:eff_I_CHSH_GPT}) on which the teleportation maps are reversible transformations, thus form a group. 
The positivity conditions on this group will turn out to be highly constraining (c.f.\ Sec.~\ref{subsec:5+2_families}):
Every such GPT is a member of one of seven, inequivalent, representation-theoretically defined families.
Towards the end of this section we discuss the quantum realizations of these families (Sec.~\ref{subsec:Q_family}), and the implications our classification results have for the \emph{local tomography} assumption (Sec.~\ref{subsec:no_all_pancakes}).

Our results are to be contrasted with the work of D'Ariano, Chiribella, and Perinotti \cite{chiribella2017quantum}.
They have discussed a system of axioms on GPTs that are strong enough to derive QM.
They also arrive at the conclusion that there must be a compact group structure associated with teleportation.
However, the axioms used in their work are much stronger than what we consider here.
For example, we do not assume that every state has a purification.


\section{GPT self-testing}\label{sec:GPT_self-testing}

Self-testing is a notion that is established in the setting of quantum theory.
In essence, it refers to the fact that:
\begin{center}
    ``For certain correlations, the quantum model realizing the correlation is unique up to isomorphisms''.
\end{center}
For example, in the CHSH scenario, if we measure a CHSH value of $2\sqrt2$ then any quantum model that explains this correlation is isomorphic to specific Pauli observables being measured on the Bell state \cite{summers1987bell,vsupic2020self}.

Here we generalize this notion to the framework of general probabilistic theories in the sense that:
\begin{center}
    ``For certain correlations, any two instances (from possibly different GPTs) achieving the correlation lead to equivalent effective GPTs''.
\end{center}
Here, by \emph{effective GPT} we mean ``the GPT obtained by discarding irrelevant degrees of freedom''.
In the case of CHSH, given an instance of the CHSH scenario, we can define \emph{the effective CHSH GPT} as ``the GPT obtained by ignoring all degrees of freedom unreachable from the instance''. 
Concretely, 
\begin{definition}[The effective CHSH GPT]
    Let $\rho, e_i, f_j$ be an instance of the CHSH scenario.
    Define $\Omega_A := \{e_i, \neg e_i\}_{i = 0,1}$, and $\Omega_B := \{f_j, \neg f_j\}_{j = 0,1}$.
    Then, the \emph{effective CHSH GPT} is the GPT closure (c.f.\ Sec.~\ref{subsubsec:generating_a_GPT}) of the input: 
    $P = \cone\big(\Omega_A \otimes \Omega_B\big)$, $D = \cone\big( \{\rho\} \cup \rho(\cdot, \Omega_B)\otimes\rho(\Omega_A, \cdot) \big)$.
\end{definition}

As mentioned in Sec.~\ref{subsubsec:generating_a_GPT}, in the above definition the permutation symmetry between the subsystems is broken.
The left subsystem of $\rho$ must always be paired with Alice's effects and the right subsystem with Bob's.

The notion of two GPTs being equivalent to each other is formalized as follows:
Given that a GPT is specified by the following tuple of data
\begin{align*}
    \Big(V, \1, \{\P^{(n)}\}_{n \in \NN}, \{\D^{(n)}\}_{n \in \NN}\Big),
\end{align*}
we can define:
\begin{definition}[GPT isomorphism]
    Two GPTs
    \begin{gather*}
        \Big(V_1, \1_1, \{\P_1^{(n)}\}_{n \in \NN}, \{\D_1^{(n)}\}_{n \in \NN}\Big)
        \\
        \text{and}
        \\
        \Big(V_2, \1_2, \{\P_2^{(n)}\}_{n \in \NN}, \{\D_2^{(n)}\}_{n \in \NN}\Big)
    \end{gather*}
    are said to be isomorphic if there exists a linear map $\gamma : V_1 \to V_2$ such that:
    \begin{enumerate}
        \item $\gamma(\1_1) = \1_2$;
        \item For every $n \in \NN$, the restriction 
        \begin{align*}
            \gamma^{\otimes n} : \Span(\P_1^{(n)}) \to \Span(\P_2^{(n)})
        \end{align*}
        is invertible, with 
        \begin{align*}
            \gamma^{\otimes n}\P_1^{(n)} = \P_2^{(n)}
            \quad 
            \text{and}
            \quad 
            (\gamma^{-t})^{\otimes n}\D_1^{(n)} = \D_2^{(n)},
        \end{align*}
        where $\gamma^t$ is the canonical transpose\footnote{Given a linear map $L: V \to W$, its \emph{canonical transpose} $L^t : W^* \to V^*$ is defined by: $\forall f \in W^*, v \in V,\ (L^t f)(v) = f(Lv)$.} of $\gamma$.
    \end{enumerate}
\end{definition}


\subsection{CHSH self-testing correlations}\label{subsec:self-test_corr}

In quantum theory, any instance that achieves a CHSH value of $2\sqrt 2$ leads to a unique effective CHSH GPT.
This is a consequence of the well-known stronger result that even the quantum model is unique \cite{summers1987bell,vsupic2020self}.
The main result of this section is that any theory achieving a CHSH value of $4$ also has a unique effective CHSH GPT.

First we show that the following pattern of correlations are sufficient to self-test the corresponding GPTs:

\begin{lemma}\label{lem:sufficient_st_conditions}
    Given an instance -- $\rho, e_i, f_j$ -- of the CHSH scenario, the following conditions are sufficient for uniqueness of the effective CHSH GPT: For $A_i, B_j$ as defined in Eqn.~\ref{eqn:correlators_from_effects}, and some $a \in (\tfrac12, 1]$,
    \begin{enumerate}
        \item $\rho(A_i \otimes \1) = \rho(\1 \otimes B_j) = 0$;
        \item $\rho( (-1)^{ij} A_i \otimes B_j) = a$.
    \end{enumerate}
\end{lemma}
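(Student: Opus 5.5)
The plan is to show that the two conditions fix every number the GPT closure can depend on. The effective CHSH GPT is built only from $\Omega_A\otimes\Omega_B$, from $\rho$, and from the conditional states $\rho(\cdot,g)\otimes\rho(h,\cdot)$ with $g\in\Omega_B$, $h\in\Omega_A$. The closure construction only ever forms cones and duals of these objects. So it suffices to show that the following data are the same for every instance satisfying conditions 1 and 2 with the same $a$:
\begin{itemize}
\item[(i)] the pairings $\rho(h\otimes g)$ for $h\in\Omega_A$, $g\in\Omega_B$;
\item[(ii)] the linear relations among the vectors of $\Omega_A$, and among those of $\Omega_B$, as they appear in the effective spaces.
\end{itemize}
Once (i) and (ii) are fixed, I will build an explicit linear map $\gamma$ between the two effective spaces that sends $e_i\mapsto e_i'$, $f_j\mapsto f_j'$ and $\1\mapsto\1'$. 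I will then check that $\gamma^{\otimes n}$ intertwines the generating sets, so it intertwines the closures.

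\emph{Step 1 (pairings).} Write $e_i=\tfrac12(\1+A_i)$ and $f_j=\tfrac12(\1+B_j)$. Then every $\rho(h\otimes g)$ is an affine combination of $\rho(\1\otimes\1)=1$, the marginals $\rho(A_i\otimes\1)$ and $\rho(\1\otimes B_j)$, and the correlators $\rho(A_i\otimes B_j)$. Conditions 1 and 2 set these to $0$ and to $\pm a$. Hence the full $4\times4$ correlation table is $\tfrac14(1\pm a)$ with the CHSH sign pattern, and it depends only on $a$.

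\emph{Step 2 (dimension and independence).} Work in the effective space, i.e.\ $\Span(\Omega_A)$ modulo the vectors annihilated by the relevant states; these are the only degrees of freedom the closure can detect. The matrix $M_{ij}=\rho(A_i\otimes B_j)=(-1)^{ij}a$ has determinant $-2a^2\neq0$. Together with the vanishing marginals, the $3\times3$ matrix with rows and columns indexed by $\{\1,A_0,A_1\}$ and $\{\1,B_0,B_1\}$ is $\mathrm{diag}(1,M)$, which is invertible. It follows that $\1,A_0,A_1$ are linearly independent as functionals, so the effective local spaces are exactly $3$-dimensional. This is the step where $a>0$ enters; the condition $a>\tfrac12$ is presumably what makes the CHSH value $4a>2$, i.e.\ a genuine violation. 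So in the effective space the only relations in $\Omega_A$ are $e_i+\neg e_i=\1$, and likewise for $\Omega_B$, which fixes (ii) uniformly.

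\emph{Step 3 (isomorphism).} Define $\gamma$ on the $3$-dimensional effective spaces by $\1\mapsto\1'$, $A_i\mapsto A_i'$, $B_j\mapsto B_j'$. It is invertible by Step 2. By Step 1 it maps the generating effects onto the generating effects. The states are determined by their values on $\Omega_A\otimes\Omega_B$, and these values agree by Step 1. Therefore $(\gamma^{-t})^{\otimes 2}$ maps $\rho$ to $\rho'$, and it maps the product states $\rho(\cdot,g)\otimes\rho(h,\cdot)$ to their primed counterparts. Finally, each step of the closure algorithm (taking duals, tensor products, intersections) commutes with applying $\gamma^{\otimes n}$. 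Induction over the algorithm then gives $\gamma^{\otimes n}\P^{(n)}=\P'^{(n)}$ and $(\gamma^{-t})^{\otimes n}\D^{(n)}=\D'^{(n)}$.

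The main obstacle is Step 2 together with the precise setup of the effective space. The original $V$ may have extra dimensions, or linear dependencies among $\1,A_0,A_1$ that the closure does not see. I must argue that the effective CHSH GPT lives on the quotient/span on which the functionals coming from $\rho$ separate points, and that this quotient is exactly the one spanned by $\{\1,A_i\}$. I would first try to read this off directly from the closure construction, namely that $\Span(\P^{(1)})$ is the span of the generators. I would then use the invertible pairing matrix to show that nothing nontrivial lies in the kernel.
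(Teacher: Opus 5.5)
Your proposal is correct, and its skeleton matches the paper's proof: fix all pairings from the two conditions, establish linear independence of $\{\1,e_0,e_1\}$ and of $\{\1,f_0,f_1\}$, then send $e_i\mapsto e_i'$, $f_j\mapsto f_j'$, $\1\mapsto\1'$.

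The real difference is how you obtain linear independence. The paper argues operationally: if $\{\1,e_0,e_1\}$ were dependent, Alice's two measurements would be jointly measurable, so CHSH could not be violated. This is exactly where it uses $a>\tfrac12$ (CHSH value $4a>2$), and the joint-measurability step is asserted rather than proved.

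You instead note that the pairing matrix of $\{\1,A_0,A_1\}$ against $\{\1,B_0,B_1\}$ is $\mathrm{diag}(1,M)$ with $\det M=-2a^2$. This argument is purely linear-algebraic and needs only $a\neq 0$, so the hypothesis $a>\tfrac12$ is stronger than this lemma requires. It also shows that $\rho$ induces a nondegenerate pairing between the two three-dimensional spans.

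That nondegeneracy is what makes your Step 3 clean. States are then determined by their values on $\Omega_A\otimes\Omega_B$, and nothing needs to be quotiented out. So the ``main obstacle'' you flag at the end is already disposed of by your own Step 2.

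What the paper's route buys is transparency about why the hypothesis is phrased as a CHSH violation.

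Both versions are equally informal on two points. Neither checks that the closure algorithm commutes with $\gamma^{\otimes n}$. And both tacitly use separate maps on the $A$- and $B$-type spaces rather than the single $\gamma$ of the isomorphism definition.
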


The first condition translates to the marginals being uniform.
The second conditions encodes both that all four correlators contribute equally and that the CHSH inequality is violated.

\begin{proof}
    We prove this by showing that the above conditions fix every pairing needed to specify the effective CHSH GPT.
    Using $A_i = e_i - \neg e_i$, condition $1$ yields 
    \begin{align*}
        \rho(e_i \otimes \1) 
        &
        = \rho(\neg e_i \otimes \1)
        \\
        &
        = \rho(\1 \otimes \1) - \rho(e_i \otimes \1)
        \\
        &
        = 1 - \rho(e_i \otimes \1).
    \end{align*}
    Applying the same argument to $B_j$ we can conclude that 
    \begin{align}\label{eqn:sufficient_1}
        \rho(e_i \otimes \1) = \rho(\1 \otimes f_j) = \tfrac12.
    \end{align}

    Now consider $\rho(A_i \otimes B_j)$.
    Using Eqn.~(\ref{eqn:sufficient_1}), and $v - \neg v = 2v - \1$, we can re-write this as 
    \begin{align*}
        &
        \rho(A_i \otimes B_j) 
        \\
        &
        = 4\rho(e_i \otimes f_j) - 2 (\rho(e_i \otimes \1) + \rho(\1 \otimes f_j)) + \rho(\1 \otimes \1)
        \\
        &
        = 4\rho(e_i \otimes f_j) - 1.
    \end{align*}
    Thus from condition $2$ we get 
    \begin{align*}
        \rho(e_i \otimes f_j) = \tfrac14(1 + (-1)^{ij} a).
    \end{align*}

    Finally, $a>\tfrac12$ implies that the CHSH inequality is violated.
    As a consequence, $\{\1, e_0, e_1\}$ must be linearly independent.
    This is because, if it were not the case, there would exist a joint measurement machine for the measurements $\{e_0, \neg e_0\}$ and $\{e_1, \neg e_1\}$, implying that the CHSH inequality cannot be violated.
    Similarly for $\{\1, f_0, f_1\}$.
    Thus if there is another instance $\rho', e_i', f_j'$ with the same CHSH value (i.e., $4a$) satisfying the two conditions, then the map specified by $e_i \mapsto e_i'$, $f_j \mapsto f_j'$, and $\1 \mapsto \1'$ is a GPT isomorphism.
\end{proof}

Using the above Lemma, we can conclude the following:

\begin{lemma}
    Any two instances achieving a CHSH value of $4$ lead to equivalent effective GPTs.
\end{lemma}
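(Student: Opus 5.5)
The plan is to reduce the statement to Lemma~\ref{lem:sufficient_st_conditions} with $a=1$. Concretely, I will show that any instance $\rho, e_i, f_j$ with $\rho(\B)=4$ automatically satisfies both conditions of that lemma. Once this holds for two instances, Lemma~\ref{lem:sufficient_st_conditions} supplies the GPT isomorphism between their effective CHSH GPTs, and we are done.

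\textbf{Condition 2.} Since $e_i, \neg e_i \in \E^{(1)}$ and $f_j,\neg f_j\in\E^{(1)}$, we have four product effects $e_i\otimes f_j$, $e_i\otimes \neg f_j$, $\neg e_i \otimes f_j$, $\neg e_i\otimes\neg f_j$. These lie in $\P^{(2)}$ (every GPT closure contains products of local effects), and they sum to $\1\otimes\1$. Hence their values on $\rho$ form a probability distribution. It follows that each correlator satisfies
\begin{align*}
    \rho(A_i\otimes B_j) = p(\text{equal}) - p(\text{unequal}) \in [-1,1].
\end{align*}
The CHSH value is the sum of the four terms $(-1)^{ij}\rho(A_i\otimes B_j)$, each at most $1$. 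So the value $4$ forces $\rho((-1)^{ij}A_i\otimes B_j)=1$ for every $i,j$. This is condition~2 with $a=1\in(\tfrac12,1]$.

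\textbf{Condition 1.} This is the step needing a small argument, since uniform marginals are not given directly. Because the correlator $\rho(A_0\otimes B_0)=1$, the outcome probabilities put zero weight on unequal outcomes:
\begin{align*}
    \rho(e_0\otimes \neg f_0)=\rho(\neg e_0\otimes f_0)=0,
\end{align*}
so $\rho(e_0\otimes\1)=\rho(e_0\otimes f_0)=\rho(\1\otimes f_0)$. The same reasoning applied to the other correlators gives:
\begin{align*}
    &\text{from } (0,1):\ \rho(e_0\otimes\1)=\rho(\1\otimes f_1),\\
    &\text{from } (1,0):\ \rho(e_1\otimes\1)=\rho(\1\otimes f_0),\\
    &\text{from } (1,1):\ \rho(e_1\otimes\1)=\rho(\1\otimes \neg f_1)=1-\rho(\1\otimes f_1),
\end{align*}
where the last uses anticorrelation. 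Write $x=\rho(e_0\otimes\1)$. Chaining these equalities gives $x=\rho(\1\otimes f_0)=\rho(e_1\otimes\1)=1-\rho(\1\otimes f_1)=1-x$. Therefore $x=\tfrac12$, and all marginals equal $\tfrac12$, which is equivalent to $\rho(A_i\otimes\1)=\rho(\1\otimes B_j)=0$.

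The only point needing care is justifying that the product effects are valid effects paired with $\rho$, so that they yield a genuine probability distribution. This holds by the GPT axioms for product measurements. With both conditions established, applying Lemma~\ref{lem:sufficient_st_conditions} to each of the two instances completes the proof.
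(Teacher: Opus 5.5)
Your proposal is correct and follows essentially the same route as the paper's proof. Both arguments use the bound $|\rho(A_i\otimes B_j)|\le 1$ to force all four signed correlators to equal $1$, read off the vanishing outcome weights, chain the resulting marginal equalities to obtain $\tfrac12$, and then invoke Lemma~\ref{lem:sufficient_st_conditions} with $a=1$.
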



\begin{proof}
    Let -- $e_i, f_j, \rho$ -- be an instance of the CHSH scenario with CHSH value 4.
    Since $|\rho(A_i \otimes B_j)| \leq 1$ we necessarily have 
    \begin{align}\label{eqn:4.corr}
        \rho((-1)^{ij} A_i \otimes B_j) = 1.
    \end{align}
    Applying $\rho$ to
    \begin{align*}
			A_i\otimes B_j &= e_i \otimes f_j + \neg e_i \otimes \neg f_j - (e_i \otimes \neg f_j + \neg e_i \otimes f_j), \\
			\1\otimes \1   &= e_i \otimes f_j + \neg e_i \otimes \neg f_j + e_i \otimes \neg f_j + \neg e_i \otimes f_j 
    \end{align*}
    directly yields
    \begin{align*}
        \rho(e_i \otimes \neg f_j) = \rho(\neg e_i \otimes f_j) & = 0 & (i,j)\neq(1,1), \\
        \rho(e_1 \otimes f_1) = \rho(\neg e_1 \otimes \neg f_1) & = 0.
    \end{align*}
    Plugging in the definition of negation for the case $(i,j) \neq (1,1)$ gives 
    \begin{align}
        &&
        \rho(e_i\otimes \1) - \rho(e_i \otimes f_j) 
        &=
        \rho(\1 \otimes f_j) - \rho(e_i \otimes f_j) \nonumber \\
        &\Rightarrow&
        \rho(e_i\otimes \1)
        &=
        \rho(\1 \otimes f_j).
        \label{eqn:4.1}
    \end{align}
	For the case $(i,j) = (1,1)$, it gives
    \begin{align}
        &&
        \rho(e_1\otimes f_1) &= 1 - \rho(\1\otimes f_1) - \rho(e_1\otimes \1) + \rho(e_1\otimes f_1) \nonumber \\
        &\Rightarrow&
        1 &=  \rho(\1\otimes f_1) + \rho(e_1\otimes \1).
        \label{eqn:4.2}
    \end{align}
    Combining Eqs.~(\ref{eqn:4.1})~and~(\ref{eqn:4.2}) yields 
    \begin{align*}
        \rho(e_i\otimes \1) = \rho(\1\otimes f_j) = \tfrac 12.
    \end{align*}
    This is equivalent to the uniform marginals condition (see proof of Lem.~\ref{lem:sufficient_st_conditions}).
\end{proof}


\section{Classification of Probabilistic theories with a stable CHSH value under teleportation.}\label{sec:classification}

Here we investigate the structure imposed on GPTs by the requirement that they maintain their CHSH value even after multiple rounds of teleportation (entanglement swapping). 
The clear choice of task for this purpose is the iterated CHSH game introduced in \cite{dmello2024entanglement}. 
For a number of rounds $N \in \NN$, the iterated CHSH game is played by $N+2$ parties -- Alice, Charlie and $N$ further parties, whom we will refer to as the ``Bobs''.
The game is specified by the following data: A bipartite state $\rho \in \S^{(2)}$, an $n$-outcome bipartite measurement $\M = \{\phi_k\}_{k \in [n]} \subseteq \E^{(2)}$ ($[n]$ denotes the index set $\{1, \ldots, n\}$), and two pairs of two-outcome measurements $\Omega_A = \{ e_i, \neg e_i \}_{i = 0,1} \subseteq \E^{(1)}$ and $\Omega_C = \{ f_j, \neg f_j \}_{j = 0,1} \subseteq \E^{(1)}$.
These data remain fixed throughout the game.
In particular, this implies that Alice and Charlie are not allowed to change their measurement devices.
Each round of the game proceeds as follows:
\begin{enumerate}
    \item 
        First, the $N$ Bobs perform the bipartite measurement $\M$ on $N+1$ copies of the bipartite state $\rho$ to produce: (i) The bipartite state $\sigma$ (as depicted in Fig.~\ref{fig:i-ent-swap}~(a)), and (ii) A list of outcomes $\vec k \in [n]^N$;
    \item 
		Next Alice and Charlie perform a CHSH test on $\sigma$.
        Given the outcome $\vec k$ they are allowed to declare which version CHSH inequality they intend to test.
        The eight different versions are related to each other by a relabelling of settings, outcomes, and parties (see App.~\ref{app:action_on_chsh_obs}).
\end{enumerate}

\begin{figure}[t]
    \centering
    \includegraphics[trim={0 6cm 0 5cm}, clip, width=\linewidth, keepaspectratio]{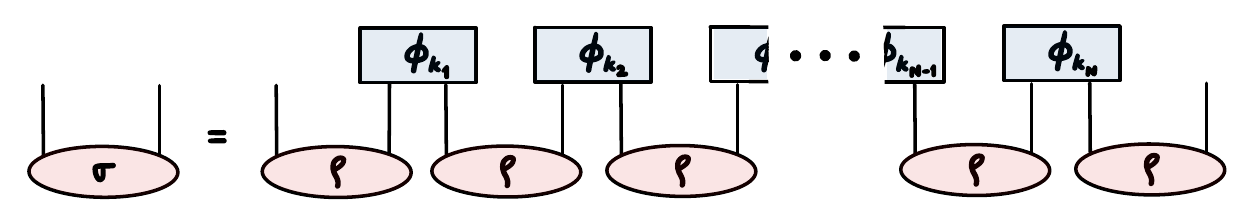}
    (a)
    
    \vspace{10pt}

    \includegraphics[trim={0 3cm 0 3cm}, clip, width=\linewidth, keepaspectratio]{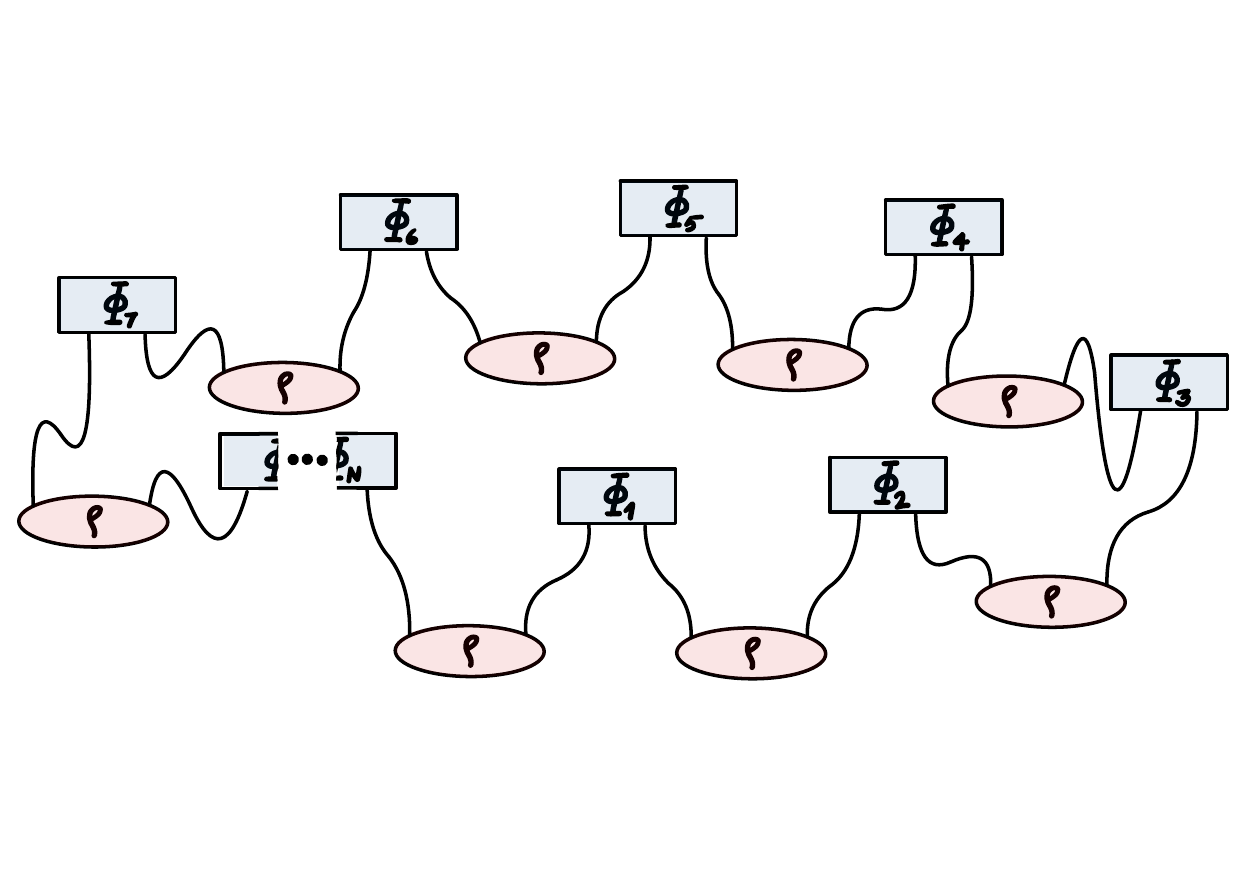}
    (b)
    \caption{\textbf{(a)} The resulting (sub-normalized) state after the Bobs perform entanglement swapping.
    \textbf{(b)} The most general experiment that can be performed given the data specifying the iterated CHSH game.
    Here $\Phi_i \in \conv(\M \cup \Omega_C \otimes \Omega_A)$, i.e., that can be either product or entangled (or a convex combination of both).
    The iterated CHSH game is a special case of this setup where e.g.\ $\Phi_1$ is chosen to be a product effect from $\Omega_C \otimes \Omega_A$ and the rest of the $\Phi_i$ are chosen from $\M$.
    This is because choosing one of the $\Phi_i$ to be a product effect allows us to ``break the loop'' and unravel it into an instance of the Iterated CHSH game.}
    \label{fig:i-ent-swap}
\end{figure}

The goal of this section is to construct the effective GPT, analogously to one introduced in Sec.~\ref{sec:GPT_self-testing}, describing every experiment that can be built out of the data specifying the iterated CHSH game.
One might assume that this includes only experiments of the kind where Alice and Charlie perform a (product) measurement on the type of states depicted in Fig.~\ref{fig:i-ent-swap}~(a).
This is not the case since it is also possible to bring the ends of this long chain together and measure $\M$ on it again.
Because such a situation must also be described by our GPT, the most general scenario, given an instance of iterated CHSH, is the one depicted in Fig.~\ref{fig:i-ent-swap}~(b).
In this figure, the effects $\Phi_i$ can be taken to be either product or entangled.
To capture the fact that it is always possible to realize correlations that are in the convex hull of these two situations, we take $\Phi_i \in \conv(\M \cup \Omega_C \otimes \Omega_A)$.


\subsection{Eliminating irrelevant degrees of freedom}\label{subsec:elim_dofs}

In Sec.~\ref{subsec:GPTs}, we introduced the space $V$ in terms of which the mathematical descriptions of measurements and states are modeled.
The effective GPT will contain only a subset of all states and measurements.
As a result, the original spaces, $V$ and $V^*$, are ``too large'', e.g.\ in the sense that $V^*$ now contains mathematically distinct states that are no longer distinguishable with respect to the remaining measurements.
Therefore, in the spirit of self-testing we eliminate the now redundant degrees of freedom from our mathematical description.

As one can see from Fig.~\ref{fig:i-ent-swap}~(b), Alice will always perform measurements on the left subsystem of the state $\rho$, while Charlie only has access to the right subsystem.
Likewise, the left subsystem of a bipartite effect in $\{\phi_k\}_{k \in [n]}$ is contracted with the right subsystem of a state $\rho$ and vice-versa.
Thus the invariance under permutation of subsystems is broken.
We will refer to the left subsystem of $\rho$ as being of ``$A$-type'', and the right subsystem as being of ``$C$-type''.
Instead of using one mathematical object $V$ for all single-party subsystems, a minimal description will assign different spaces $V_A, V_C$ to these two different types.

\begin{lemma}
	The effective GPT can be described in terms of spaces $V_A, V_C$ with the following property:
	The map $\hat\rho: V_A\to V_C^*$,
	associated to the bipartite state $\rho$ in the sense of Eq.~(\ref{eqn:rhohat}),
	is a linear isomorphism.
\end{lemma}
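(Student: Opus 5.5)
The plan is a quotient construction in the spirit of the effective CHSH GPT of Sec.~\ref{sec:GPT_self-testing}: keep exactly those degrees of freedom that some experiment of the type in Fig.~\ref{fig:i-ent-swap}~(b) can distinguish. Every effect appearing there is built from product effects in $\Omega_C\otimes\Omega_A$ and from the $\phi_k$. The $A$-legs of these effects are always contracted with the left leg of a copy of $\rho$, and the $C$-legs with the right leg.

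\textbf{Step 1: define the spaces.} Let $W_A\subseteq V$ be the span of everything that can sit on an $A$-type leg. This includes $\1$ and $\Omega_A$, and it is closed under the operations that produce new $A$-type effects. For example, for any $C$-type vector $c$, contracting $\phi_k$ against $\hat\rho$ on one leg gives a new $A$-type vector of the form $\phi_k(\hat\rho(\cdot),c)$-type contractions. Define $W_C$ analogously. Both spaces are generated by finitely many iterations in a finite-dimensional space, so the construction stabilizes.

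\textbf{Step 2: quotient out the kernel.} Every pairing in the effective GPT evaluates an $A$-type vector $x$ only through $\hat\rho(x)\in V^*$. That functional is then paired with $C$-type vectors, either directly or after further contraction around the loop. Hence the only data about $x$ that any experiment sees is $\hat\rho(x)$ restricted to $W_C$. Set
\[
V_A := W_A/N_A,\qquad N_A:=\{x\in W_A:\ \rho(x,y)=0\ \forall y\in W_C\},
\]
and define $V_C := W_C/N_C$ symmetrically. The bilinear form $\rho$ descends to a nondegenerate pairing $V_A\times V_C\to\RR$. Nondegeneracy in both arguments gives that $\hat\rho:V_A\to V_C^*$ is injective and that $\hat\rho^T:V_C\to V_A^*$ is injective. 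Thus $\dim V_A\le\dim V_C$ and $\dim V_C\le\dim V_A$, the dimensions agree, and $\hat\rho$ is a linear isomorphism.

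\textbf{Step 3 (the main obstacle): well-definedness.} I must show that replacing each leg's space by the quotient loses nothing. Concretely, every probability in Fig.~\ref{fig:i-ent-swap}~(b) must be unchanged when an $A$-type argument is shifted by an element of $N_A$, and likewise for $C$-type arguments. I would argue by cutting the loop at a copy of $\rho$: any such probability has the form $\rho(x,y)$. Here $x$ is an $A$-type vector obtained by contracting the rest of the diagram, which by construction lies in $W_A$, and $y$ lies in $W_C$. Writing the probability as $\rho(x,y)$ makes invariance under $N_A$ and $N_C$ immediate.

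The delicate points are the following.
\begin{enumerate}
\item The closure in Step 1 must be defined so that such contracted vectors genuinely lie in $W_A$ and $W_C$. In particular, $\phi_k$ must be regarded as a tensor in $W_C\otimes W_A$, projected appropriately.
\item The resulting cones $\P^{(n)}$ and $\D^{(n)}$ must pass to the quotient and remain closed and pointed.
\end{enumerate}
Pointedness of the quotient effect cones follows because $\rho$ is a positive functional, which makes the images of the cones pointed, and because the images are finitely generated. The state cones are then obtained by the GPT closure, exactly as in the definition of the effective CHSH GPT.
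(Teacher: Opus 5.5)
Your Steps~2 and~3 contain the same mechanism as the paper. The paper takes $V_A$ to be a complement of $\ker\hat\rho$ in $V$ and sets $V_C:=(\img\hat\rho)^*$. These are your quotients with $W_A=W_C=V$: then $N_A=\ker\hat\rho$ and $V/N_C\cong(\img\hat\rho)^*$. The paper checks that no statistics are lost via $\hat\rho=\iota\tilde\rho\pi$ and cyclicity of the trace, i.e.\ because every leg of every effect is contracted with a leg of $\rho$.

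The genuine gap is your Step~1. The experiments of Fig.~\ref{fig:i-ent-swap}~(b) include closed loops built only from entangled effects, such as $\tr(\hat\rho\hat\phi_k)$; these are exactly the pairings used in Lem.~\ref{lem:compact_grp_has_pos_char}. Cutting such a loop at a copy of $\rho$ leaves a $2$-tensor, not a pair $(x,y)$. Its legs are the legs of the $\phi_k$, and these are not generated by teleporting $\Omega_A,\Omega_C$. So your claim in Step~3 that the contracted object lies in $W_A$ ``by construction'' fails. Your delicate point~1 also cannot be repaired by choosing the projection cleverly.

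Concretely, take the $K_4$ example of App.~\ref{App:family_members}, with $\hat\rho=\gamma$ and $\hat\phi_g=\tfrac14 g\gamma^{-1}$. All maps there are block diagonal with respect to the span of the first, second and fourth coordinate vectors and the span of the third. Since $\Omega_A,\Omega_C$ lie in the first block, any closure of them under contractions with $\rho$ and the $\phi_g$ gives a three-dimensional $W_A=W_C$ with $N_A=N_C=0$. Matching the chain statistics $\rho(L_g e,f)$ forces the quotient teleportation map to be $\tilde L_g=\tfrac14 g|_{W_A}$. The loop probability is then $\tr(\tilde L_{\xi^2})=\tfrac14(-1-1+1)=-\tfrac14$ instead of $\tfrac14\tr(\xi^2)=0$. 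This is precisely the pancake obstruction of Sec.~\ref{subsec:no_all_pancakes}: the span of teleported CHSH effects is too small to carry the entangled effects.

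The fix is to drop Step~1 and take $W_A=W_C=V$. Alternatively, require at least $W_A\supseteq\img\hat\phi_k$, and the analogous $C$-side supports, for every $k$. Then no iteration or projection is needed, and Step~2's dimension count gives the isomorphism. Step~3 goes through once ``$\rho(x,y)$'' is replaced by $\rho$ evaluated on a tensor in $W_A\otimes W_C$, using linearity.

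Your delicate point~2 is not part of this lemma. The paper only needs the probabilities to be reproduced; the cones are generated afterwards by the GPT closure in Def.~\ref{def:eff_I-CHSH_GPT}. In any case, positivity of the single functional $\rho$ would not by itself make a quotient cone pointed. That needs a point-separating family of positive functionals.
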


\begin{proof}
    For a number $N \in \NN$ of rounds, the probability distributions generated by experiments of the form Fig.~\ref{fig:i-ent-swap}~(b) are given by
	\begin{align*}
			\tr\Big( (\hat\rho\hat\Phi_1) \cdots (\hat\rho\hat\Phi_N) \Big),
	\end{align*}
	where $\Phi_i \in \conv\big(\M \cup (\Omega_C \otimes \Omega_A)\big)$.

	Now, Let $\ker \hat\rho$ and $\img \hat\rho$ be the kernel and image of the map $\hat\rho$ respectively.
	Choose decompositions 
	\begin{align*}
			V = U \oplus \ker \hat\rho,
			\qquad
			V^* = \img \hat\rho \oplus W^*.
	\end{align*}
	Clearly $\hat\rho$ induces an isomorphism $\tilde\rho : U \to \img\hat\rho$.
    Thus we claim that 
    \begin{align*}
		V_A := U, \qquad V_C := (\img\hat\rho)^*,
	\end{align*}
    are the spaces whose existence is posited by the lemma.

    To prove this, we have to show that, for an appropriate restriction of the effects, the probability distributions generated by the restricted state and effects are identical to the distributions generated by the original ones.
	Let $\pi$ be the projector onto $U$ along $\ker \hat\rho$, and let $\iota$ be the embedding of $\img \hat\rho$ into $V^*$.
	Then we can obtain the maps induced by the original effects as
	\begin{gather*}
		\tilde \phi_k := \pi \hat\phi_k \iota : \img\hat\rho \to U,
	\end{gather*}
    and similarly
    \begin{align*}
        \widetilde{f \otimes e} := \pi (\widehat{f \otimes e}) \iota,
    \end{align*}
    for $f \otimes e \in \Omega_C \otimes \Omega_A$.
	And indeed, this restriction leaves the probability distributions unchanged:
	\begin{align*}
			\tr\Big( (\tilde\rho\tilde\Phi_1) \cdots (\tilde\rho\tilde\Phi_N) \Big)
			&
			= \tr\Big( \tilde\rho(\pi\hat\Phi_1\iota) \cdots \tilde\rho(\pi\hat\Phi_N\iota) \Big)
			\\
			&
			= \tr\Big( (\iota\tilde\rho\pi)\hat\Phi_1 \cdots (\iota\tilde\rho\pi)\hat\Phi_N \Big)
			\\
			&
			= \tr\Big( (\hat\rho\hat\Phi_1) \cdots (\hat\rho\hat\Phi_N) \Big),
	\end{align*}
	where we have used that $\iota \tilde\rho \pi = \hat\rho$ by construction.
\end{proof}

Hereafter we assume that the above reduction has already been made, i.e., the data for the iterated CHSH game is: $\rho \in (V_A \otimes V_C)^*$, $\M \subseteq  V_C \otimes V_A$ and $\Omega_C \otimes \Omega_A \subseteq V_C \otimes V_A$.


\subsection{The relabelling group}

The iterated CHSH game allows Alice and Charlie to choose a CHSH inequality based on the measurement outcomes obtained by the Bobs.
All eight versions of the CHSH inequality arise from each other by exchanging the labels of the measurement settings or outcomes of just one of the two parties (see App.~\ref{app:action_on_chsh_obs}). 
It turns out that analyzing the group action of these ``relabelling operations'' on the mathematical description of the GPT is very fruitful.

As an example, say we relabel the settings of Alice.
Mathematically, this constitutes a group action on $\Omega_A$, namely $e_0 \leftrightarrow e_1$, $\neg e_0 \leftrightarrow \neg e_1$.
Operationally, this gives us another two-setting two-outcome measurement machine which violates the CHSH inequality with $A_0$ and $A_1$ exchanged, which links it to the following group action on the CHSH observables:
\begin{gather*}
    A_0C_0
    + A_0C_1
    + A_1C_0
    - A_1C_1
    \\
    \updownarrow
    \\
    A_0C_0
    - A_0C_1
    + A_1C_0
    + A_1C_1.
\end{gather*}
The outcome relabelling admits a similar description.
This relabelling group action has the following properties:
\begin{itemize}
    \item Exchanging setting and outcome labels each constitute a $\ZZ_2$ action.
    Since the exchange of setting labels \emph{permutes} the exchange of outcome labels, the full group is a wreath product (see \cite{Robinson1996}~Section~1.6) of the two $\ZZ_2$ actions.
	The wreath product of $\ZZ_2$ with itself is isomorphic to the dihedral group of order 8, denoted by $D_4$ (see exercises of \cite{Robinson1996}~Section~1.6).
    
    \item This group action is \emph{free} (i.e.\ every CHSH observable is stabilized only by the identity), and
			\emph{transitive} (i.e.\ any two CHSH observables can be mapped onto each other by a suitable relabeling).
			See App.~\ref{app:action_on_chsh_obs} for further details.
\end{itemize}
These properties imply that we can rephrase the second step of the iterated CHSH game as follows 
\begin{center}
    ``Based on $\vec k$, Alice and Charlie are allowed to pick the CHSH inequality that they will test.''

    $\updownarrow$

    ``Based on $\vec k$, Alice relabels her measurement device. Alice and Charlie then perform the CHSH test with respect to the standard CHSH observable.''
\end{center}

This rephrasing will allow us to relate the \emph{teleportation maps} (refer Fig.~\ref{fig:teleportation_map}) that appear in the entanglement swapping step of the iterated CHSH game to the local corrections applied by Alice, which forms a group.
Indeed, define the teleportation maps as follows:
\begin{align*}
	R_k &:= \hat\rho\hat\phi_k : V_C^*\to V_C^*.
\end{align*}
\begin{figure}[b]
    \centering
    \includegraphics[trim={0 0 0 0}, clip, height=25mm, keepaspectratio]{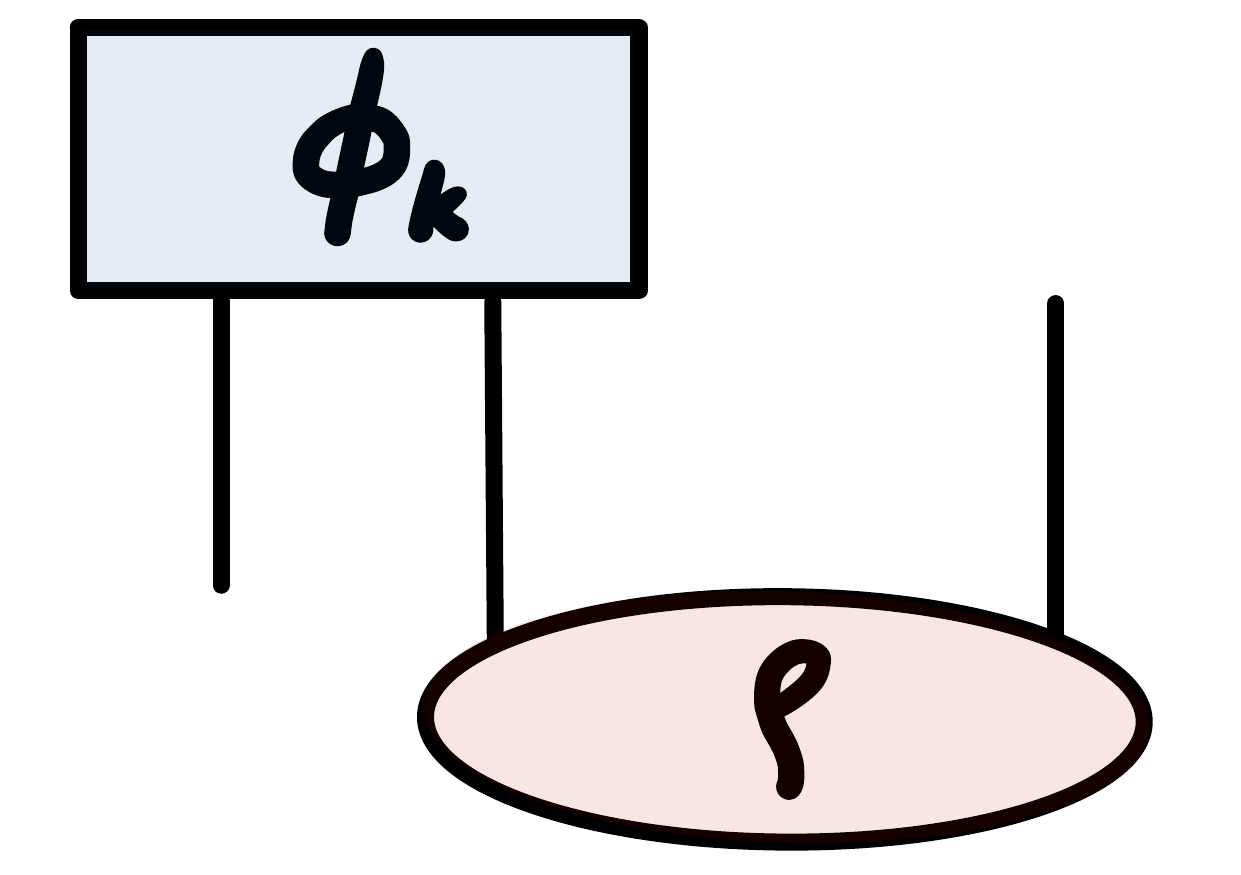}
    \caption{Diagrammatic representation of the teleportation map $R_k$.}
    \label{fig:teleportation_map}
\end{figure}
The corresponding state after after one round of entanglement swapping, conditioned on obtaining outcome $k$ of $\M$, can be written in terms of $R_k$ as:
\begin{align*}
    \hat\sigma = \tfrac1{p_k}R_k \hat\rho,
\end{align*} 
where $p_k := R_k  \hat\rho(\1)(\1)$, is the probability of obtaining the outcome $k$. 
Extend this definition to $N$ rounds as follows: For each outcome $\vec k = (k_1, \cdots, k_N)$ that can occur, define 
\begin{align*}
    R_{\vec k} := R_{k_1}  \cdots  R_{k_N}.
\end{align*}
Then the resulting bipartite state is 
\begin{align*}
    \hat\sigma = \tfrac1{p_{\vec k}} R_{\vec k}  \hat\rho,
\end{align*}
with $p_{\vec k}$ defined analogously.
Because $\hat\rho$ is an isomorphism, the state $\hat\sigma$ is in a one to one correspondence with the map $\tfrac1{p_{\vec k}}R_{\vec k}$.
By assumption, there exists a map that that assigns to each outcome $\vec k$ a correction from $D_4$.
The correction need only depend on the state realized between Alice and Charlie conditioned on obtaining $\vec k$, i.e., if two different outcomes $\vec k, \vec k'$ correspond to the same state, then the corrections must also be the same.
Hence, there exists a map $\vartheta$ that sends each realizable $\tfrac1{p_{\vec k}}R_{\vec k}$ to an element of $D_4$.
Let $S$ be the set of all realizable teleportation maps:
\begin{align}\label{eqn:def_of_S}
    S := \Big\{ \tfrac1{p_{\vec k}}R_{\vec k} \ \Big| \ \vec k \in [n]^N, N \in \NN \Big\}.
\end{align}
Then 
\begin{equation}\label{eqn:def_of_vartheta}
    \begin{aligned}
        \vartheta : \quad S \quad &\to D_4, \\
        \tfrac1{p_{\vec k}}R_{\vec k} &\mapsto \vartheta_{\vec k}.
    \end{aligned}
\end{equation}
Then, under the aforementioned rephrasing, we can describe the iterated CHSH game mathematically as follows: The Bobs each measure $\M$ and obtain outcome $\vec k$, which implies that the bipartite state between Alice and Charlie is $\tfrac1{p_{\vec k}}R_{\vec k}  \hat\rho$.
Next, Alice applies the relabelling $\forall e \in \Omega_A,\ e \mapsto \vartheta_{\vec k}^{-1} e$.
Finally, Alice and Charlie perform the usual CHSH test.
The resulting correlations are described by the distribution on $e \otimes f \in \Omega_A \otimes \Omega_C$ given by:
\begin{align*}
    e \otimes f \quad \mapsto \quad \tfrac1{p_{\vec k}} R_{\vec k}  \hat\rho(\vartheta_{\vec k}^{-1} e)(f).
\end{align*}

\emph{Remark:} In this formalization, $R_k$ acts on $C$-type subsystem of the state in Schr\"odinger picture, whereas $\vartheta_{\vec k}$ acts on $A$-type local observables in Heisenberg picture.
In particular, the ``last error introduced on Charlie's side, must be corrected first by Alice's correction'', which is why we chose the correction to be phrased in terms of $\vartheta_{\vec k}^{-1}$.
Below we'll see that these conventions make $\vartheta$ a homomorphism.


\subsection{Sufficient condition for classification}\label{subsec:sufficient_cond_for_classification}

The iterated CHSH scenario stands almost in analogy to the CHSH scenario in the following sense: There exists a sufficient condition on the correlations generated by an instance of the iterated CHSH game, satisfying which, the effective GPT generated will belong to one of seven inequivalent families (as opposed to being unique).

In this section, we state a sufficient condition for our classification result and derive some primliminary consequences of the same.
The fact that this condition is sufficient will only be clear by the end of Sec.~\ref{subsec:5+2_families}.
The condition is as follows:

\begin{condition}\label{cond:self-test_all_N}
    For the instance of the iterated CHSH game under consideration, it holds that, for every $N\in \NN$ and every outcome $\vec k$,
	\begin{itemize}
		\item
			the CHSH value remains constant, and is equal to the value without any entanglement swapping ($N = 0$), and
		\item
			it satisfies the two self-testing conditions of Lem.~\ref{lem:sufficient_st_conditions}.
	\end{itemize}
\end{condition}

\emph{Remarks:} (1) Condition~\ref{cond:self-test_all_N} allows for the possibility that the CHSH value of the instance is smaller than that of the GPT from which it originated.
As long as this CHSH value is preserved in the iterated CHSH game, the effective GPT (which will be constructed later) will be subject to the classification result.
(2) If the CHSH value of the theory is equal to $4$, then stability under teleportation implies Condition~\ref{cond:self-test_all_N}.

Now, without loss of generality, we can assume that every outcome $k \in [n]$ of $\M$ has non-zero probability ($p_k \neq 0$).
This is because every $\phi_k$ with $p_k = 0$ can be grouped (by summing the effects) with any $\phi_l$ with $p_l \neq 0$.
Under this assumption we have the following consequence of Condition~\ref{cond:self-test_all_N}:

\begin{lemma}\label{lem:correction_map}
	For any instance of the iterated CHSH game satisfying Condition~\ref{cond:self-test_all_N}, it holds that $S$ (Eqn.~(\ref{eqn:def_of_S})) is a semigroup and $\vartheta$ (Eqn.~(\ref{eqn:def_of_vartheta})) a semigroup homomorphism.
\end{lemma}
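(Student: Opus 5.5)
Both claims reduce to properties of the normalization $p_{\vec k}$ and to the self-testing rigidity of Lem.~\ref{lem:sufficient_st_conditions}. The plan has three steps.

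\textbf{Step 1: closure of $S$.} For outcome strings $\vec k\in[n]^N$ and $\vec l\in[n]^M$, let $\vec k\vec l$ denote their concatenation. By definition $R_{\vec k\vec l}=R_{\vec k}R_{\vec l}$. I claim that $p_{\vec k\vec l}$ is nonzero and that $\tfrac1{p_{\vec k\vec l}}R_{\vec k\vec l}=\big(\tfrac1{p_{\vec k}}R_{\vec k}\big)\big(\tfrac1{p_{\vec l}}R_{\vec l}\big)$, i.e.\ $p_{\vec k\vec l}=p_{\vec k}p_{\vec l}$.

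I would show that every $p_{\vec k}$ is the product of the single-round probabilities $p_{k_i}$. Consider the normalized state $\tfrac1{p_{\vec l}}R_{\vec l}\hat\rho$. By Condition~\ref{cond:self-test_all_N} it satisfies the conditions of Lem.~\ref{lem:sufficient_st_conditions}, after the relabelling $\vartheta_{\vec l}$ is applied on Alice's side. The proof of that lemma shows these conditions fix all pairings of the state with $\Omega_A\otimes\Omega_C$ and $\1$. Since $\{\1,e_0,e_1\}$ and $\{\1,f_0,f_1\}$ are linearly independent, and (after the reduction of Sec.~\ref{subsec:elim_dofs}) they span the relevant spaces, the normalized state must equal $\hat\rho$ composed with the relabelling: $\tfrac1{p_{\vec l}}R_{\vec l}\hat\rho=\hat\rho\,\vartheta_{\vec l}$. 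Here I treat $\vartheta_{\vec l}$ as the linear map on $V_A$ that permutes $\Omega_A$.

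Consequently the probability of the next outcome $k$, conditioned on $\vec l$, is $R_k\hat\rho\,\vartheta_{\vec l}(\1)(\1)=R_k\hat\rho(\1)(\1)=p_k$, because the relabelling fixes $\1$. By induction, $p_{\vec k}=\prod_i p_{k_i}\neq 0$. Multiplicativity then follows, so $S$ is closed under composition and is a semigroup.

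\textbf{Step 2: homomorphism property.} I would use the same rigidity identity. Write $s_{\vec k}=\tfrac1{p_{\vec k}}R_{\vec k}$. Then
\[
s_{\vec k}s_{\vec l}\hat\rho=s_{\vec k}\hat\rho\,\vartheta_{\vec l}=\hat\rho\,\vartheta_{\vec k}\vartheta_{\vec l}.
\]
On the other hand, $s_{\vec k\vec l}\hat\rho=\hat\rho\,\vartheta_{\vec k\vec l}$. Because $\hat\rho$ is an isomorphism (the lemma of Sec.~\ref{subsec:elim_dofs}), it follows that $\vartheta_{\vec k\vec l}$ and $\vartheta_{\vec k}\vartheta_{\vec l}$ act identically on $V_A$. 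Since the $D_4$ action on $\Omega_A$ is free, as shown in App.~\ref{app:action_on_chsh_obs}, the two group elements coincide.

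The order of composition has to be checked against the Heisenberg/Schr\"odinger conventions in the Remark, which is where the $\vartheta^{-1}$ convention pays off. This argument also shows that $\vartheta$ is well defined, i.e.\ it depends only on the map $s$.

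\textbf{Main obstacle.} The hardest step is the rigidity identity $s_{\vec l}\hat\rho=\hat\rho\,\vartheta_{\vec l}$. Lem.~\ref{lem:sufficient_st_conditions} only fixes pairings on $\Omega_A\otimes\Omega_C$. Upgrading this to an equality of maps requires that, in the reduced effective description, $V_A$ is spanned by $\Omega_A$ and $V_C$ by $\Omega_C$. If that fails, I would instead argue only with the restricted data, i.e.\ the effective CHSH GPT, and note that $\vartheta$ is defined via the correlations alone. Step 1 also rests on this rigidity, through the normalization $\hat\rho\,\vartheta_{\vec l}(\1)(\1)=1$.
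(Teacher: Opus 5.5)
Your route is essentially the paper's: take the one-round identity from Lem.~\ref{lem:sufficient_st_conditions}, compose it, compare with the two-round identity, and identify the normalizations and group elements. The step you flag as the main obstacle is a genuine gap, and the justification you give for it is false.

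\textbf{Why the spanning claim fails.} The reduction of Sec.~\ref{subsec:elim_dofs} only quotients out $\ker\hat\rho$; it does not make $V_A=\Span\Omega_A$. In the paper's own examples (App.~\ref{App:family_members}) $V_A=\RR^4$ while $\Span\Omega_A$ is three-dimensional, and Sec.~\ref{subsec:no_all_pancakes} argues this is forced. Since $\hat\rho$ is invertible, $g_{\vec l}:=\hat\rho^{-1}s_{\vec l}\hat\rho=\tfrac1{p_{\vec l}}L_{\vec l}$ is a well-defined linear map on $V_A$. What both of your steps actually need is that $g_{\vec l}$ acts on $\Span\Omega_A$ exactly as the relabelling $\vartheta_{\vec l}$, and in particular that $g_{\vec l}\1=\1$. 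Lemma~\ref{lem:sufficient_st_conditions} gives this only modulo the subspace $\{v\in V_A:\hat\rho(v)(f)=0\ \forall f\in\Omega_C\}$, which is nonzero as soon as $\dim V_A>3$.

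Concretely, both steps break. In Step~1, the conditional probability of outcome $k$ after $\vec l$ is the $C$-marginal of $s_{\vec l}\hat\rho$ evaluated on the effect $R_k^t\1=\phi_k(\cdot,\rho_A)$, with $\rho_A:=\rho(\cdot,\1)$, and this effect need not lie in $\Span\Omega_C$. In Step~2, the identity $s_{\vec k}s_{\vec l}\hat\rho=s_{\vec k}\hat\rho\,\vartheta_{\vec l}$ applies $R_{\vec k}$ to functionals that are known to agree only on $\Span\Omega_C$. Retreating to the effective CHSH GPT does not help, because the $R_k$ are not defined there. A minor point: the $D_4$ action on $\Omega_A$ is faithful, not free (flipping the outcome of setting $0$ fixes $e_1$), but faithfulness is all Step~2 needs.

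\textbf{A counterexample.} The gap cannot be closed from Condition~\ref{cond:self-test_all_N} and Lem.~\ref{lem:sufficient_st_conditions} alone. Start from the quantum instance of Sec.~\ref{subsec:Q_family} and adjoin a classical bit to each local system. Let the two bits of $\rho$ be in the state $q|00\rangle\langle00|+(1-q)|11\rangle\langle11|$ with $0<q<1$. On the bits $\hat\rho$ is then $\mathrm{diag}(q,1-q)$, so the reduction removes nothing. Refine $\M$ to $\{|b_k\rangle\langle b_k|\otimes\Pi_m\}$, where $\Pi_m$ are the parity projectors on the two bits, and let Alice's and Charlie's effects ignore the bits.

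Condition~\ref{cond:self-test_all_N} holds with $a=1/\sqrt2$. However, $p_{(k,0)}=\tfrac14\big(q^2+(1-q)^2\big)$ while $p_{(k,0)(k',0)}=\tfrac1{16}\big(q^3+(1-q)^3\big)$. The latter differs from $p_{(k,0)}p_{(k',0)}$ whenever $q\neq\tfrac12$. Hence $s_{(k,0)}s_{(k',0)}\hat\rho$ is not normalized, and $S$ is not closed under composition.

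The paper's proof makes the same move you do, implicitly, when it applies $\tfrac1{p_l}R_l$ to both sides of an identity established only on $\Omega_A\otimes\Omega_C$. So you have located the real weak point, but an extra hypothesis is needed. One sufficient choice is that $\tfrac1{p_{\vec l}}L_{\vec l}$ restricts to $\vartheta_{\vec l}$ on $\Span\Omega_A$; granted that, your Steps~1 and~2 go through as written.
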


\begin{proof}
	Given Condition~\ref{cond:self-test_all_N}, Lemma~\ref{lem:sufficient_st_conditions} applies and we can conclude that 
	for every outcome $\vec k$, and local effects $e \in \Omega_A$ and $f \in \Omega_C$, 
	\begin{align*}
		\tfrac1{p_{\vec k}} R_{\vec k}  \hat\rho(\vartheta_{\vec k}^{-1} e)(f) =  \hat\rho(e)(f),
	\end{align*}
	i.e.\ the correlations after relabeling are identical to the ones obtained in the CHSH test without any entanglement swapping.

	Specialize the above to $N = 1$, and set $e' := \vartheta_{k}^{-1}e$, to get
	\begin{align*}
			\tfrac1{p_{k}}R_{k} \hat\rho(e')(f) = \hat\rho(\vartheta_{k}e')(f).
	\end{align*}
	Applying $\tfrac1{p_{l}}R_l$ to the left hand side and substituting $e' \mapsto e$ gives
	\begin{align*}
			\tfrac1{p_{l}}R_l\tfrac1{p_{k}}R_{k} \hat\rho(e)(f)
			&
			= \tfrac{1}{p_{l}p_{k}}R_{l \circ k} \hat\rho(e)(f)
			\\
			&
			= \tfrac{p_{l \circ k}}{p_{l}p_{k}} \hat\rho(\vartheta_{l \circ k}e)(f),
	\end{align*}
	while on the right hand side it gives 
	\begin{align*}
			\tfrac1{p_{l}}R_{l} \hat\rho(\vartheta_{k}e)(f)
			&
			= \hat\rho(\vartheta_{l}\vartheta_{k}e)(f).
	\end{align*}
	Equating both sides leads to 
	\begin{align*}
			\tfrac{p_{l \circ k}}{p_{l}p_{k}} \hat\rho(\vartheta_{l \circ k}e)(f) = \hat\rho(\vartheta_{l}\vartheta_{k}e)(f).
	\end{align*}
	Since the action of $\vartheta_{\vec k}$ is free, and $\forall k \in [n], p_k \neq 0$, we can conclude: $p_{l \circ k} = p_l p_k \neq 0$, and $\vartheta_{l \circ k} = \vartheta_l \vartheta_k$.
	Iterating this proves the claim.
\end{proof}

In light of Lem.~\ref{lem:correction_map} we can make two further assumptions about the measurement $\M$.
These assumptions are similar to the one made before Lem.~\ref{lem:correction_map}, i.e., they involve ``lowering the resolution'' of the measurement $\M$, without loss of generality, in order to make it simpler.
Thus we call this step ``coarse-graining''.

\begin{lemma}[Coarse-graining]\label{lem:coarse-graining}
    Without loss of generality we can assume the measurement $\M = \{ \phi_k \}_{k \in [n]}$ to have the following properties: 
    \begin{enumerate}
        \item Every correction in the image of $\vartheta$ can already be realized in $\M$, i.e., 
        \begin{align*}
            H := \{ \vartheta_k \}_{k \in [n]}
        \end{align*}
        is a subgroup of $D_4$;
        \item Every outcome $k \in [n]$ corresponds to a different correction;
    \end{enumerate}
\end{lemma}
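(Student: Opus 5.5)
The plan is to replace the given measurement $\M$ by a new measurement $\M'$ built from $\M$ by grouping and repetition. We then check that $\M'$ still satisfies Condition~\ref{cond:self-test_all_N}, and that the effective GPT and semigroup $S$ it generates carry the same operational content as before. Property~2 is handled by merging outcomes and property~1 by passing to blocks of rounds.

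\textbf{Step 1 (merging equal corrections).} For each $g\in\img\vartheta|_{\M}$, define $\phi'_g:=\sum_{k:\vartheta_k=g}\phi_k$. This is a coarse-graining, so $\{\phi'_g\}$ is again a measurement in $\E^{(2)}$, with $p'_g=\sum p_k$. From the proof of Lem.~\ref{lem:correction_map} we have $\tfrac1{p_k}R_k\hat\rho(e)=\hat\rho(\vartheta_k e)$ on $\Omega_A$. Summing over the block gives $\tfrac1{p'_g}R'_g\hat\rho(e)=\hat\rho(g e)$. Hence the relabelled correlations are unchanged, Condition~\ref{cond:self-test_all_N} persists with correction $g$, and distinct outcomes now carry distinct corrections.

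\textbf{Step 2 (closing under products).} Let $H$ be the subgroup generated by $\{\vartheta_k\}$. By Lem.~\ref{lem:correction_map}, the image of $\vartheta$ on $S$ is exactly the semigroup generated by the $\vartheta_k$, and in the finite group $D_4$ this semigroup equals $H$. So there is an $m$ (for instance $|D_4|=8$, or the smallest $m$ for which the words of length exactly $m$ realize every element of $H$) such that every $h\in H$ equals $\vartheta_{\vec k}$ for some $\vec k\in[n]^m$. To get words of a fixed length, pad them using the identity: some power of any $\vartheta_k$ is the identity. We then regard $m$ consecutive Bobs as one measurement $\M^{(m)}$ with outcomes $\vec k\in[n]^m$, namely the effect implementing $R_{\vec k}$ (a composite of $\rho$'s and $\phi$'s), and apply Step~1 to it.

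\textbf{Main obstacle.} The delicate point is justifying that Step~2 is without loss of generality. The iterated game with $\M^{(m)}$ only probes $N$ that are multiples of $m$, and its $S$ is a sub-semigroup of the original one. One must also check that the blocked measurement still lives in the class the classification applies to, i.e.\ that it is a legitimate bipartite $V_C\otimes V_A$ effect after absorbing the intermediate $\rho$'s. To argue that no generality is lost, I would show that the group structure derived later depends only on the closure of $S$ and on $H$, and that these are unchanged: $\img\vartheta=H$ in both cases, and every element of the original $S$ divides some element of the blocked $S$ via the padding trick. This suffices for any statement about invertibility or group closure. If this turns out to be too delicate, the fallback is to treat Step~2 as a modelling convention: the Bobs may group their rounds, and stability under teleportation is assumed for every such grouping.
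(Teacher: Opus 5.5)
Your Step~1 is the same as the paper's last step (``group together every effect that corresponds to the same correction''). Your check that $\sum_{k:\vartheta_k=g}R_k\hat\rho(e)=p'_g\,\hat\rho(ge)$ preserves Condition~\ref{cond:self-test_all_N} is a useful detail the paper omits.

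The gap is in Step~2. There need not be a single block length $m$ with $\{\vartheta_{\vec k}:\vec k\in[n]^m\}=H$, and padding cannot fix this, because every word evaluating to the identity may have even length. If every $\vartheta_k$ lies in the nontrivial coset $xK$ of an index-two subgroup $K$ of $H$, then every word of length $m$ lands in $x^mK$, never in all of $H$.

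This really happens. Take the $\chi^{(D_4)}_{135}$ example of App.~\ref{App:family_members} and use $\{\tfrac14 g\gamma^{-1}\}_{g\ \text{a reflection}}$ as the measurement. It is a valid measurement: in that representation the four reflections average to the projector onto the trivial component, just as all of $D_4$ does. Its corrections are the four reflections, which generate $D_4$. But odd-length blocks give only reflections, which do not form a subgroup. Even-length blocks give only the rotations, so the correction group would silently change from $D_4$ to $\ZZ_4$. Either way your blocked measurement does not deliver property~1 with the original $H$.

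The missing idea is to randomize the block length rather than fix it. The paper replaces $\M$ by the mixture $\tfrac18\bigcup_{i=1}^8\{\hat\rho^{-1}R_{\vec k}\mid\vec k\in[n]^i\}$. This is a measurement because each block sums to the unit effect. Its normalized teleportation maps are exactly the $\tfrac1{p_{\vec k}}R_{\vec k}$ for words of length $1$ to $8$, since the factor $\tfrac18$ cancels. Hence Condition~\ref{cond:self-test_all_N} is inherited verbatim. Because the length-one words are included, the generated semigroup is the original $S$ itself, which also disposes of your ``main obstacle'' about only probing multiples of $m$.

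Words of length at most $8$ already realize every element of the semigroup generated by the $\vartheta_k$. In any longer word two nonempty prefixes evaluate to the same element, and the segment between them multiplies to the identity and can be deleted. In the finite group $D_4$ that semigroup is the subgroup $H$. Merging equal corrections as in your Step~1 then gives both properties.
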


\begin{proof}
    For (1), since the $D_4$ group is of order $8$, the set 
    \begin{align*}
        \bigcup_{i = 1}^8\{ \vartheta_{\vec k} \ | \ \vec k \in [n]^i \}
    \end{align*}
    contains all the corrections that could occur.
    Thus we replace $\M$ by the measurement
    \begin{align*}
        \tfrac18 \bigcup_{i = 1}^8 \{ \hat\rho^{-1}R_{\vec k} \ | \ \vec k \in [n]^i \}.
    \end{align*}
    Now group together every effect that corresponds to the same correction to achieve (2).
\end{proof}


\subsection{Properties of the teleportation semigroup}\label{subsec:prop_of_tel_semigrp}

In this section we investigate the properties of the semigroup $S$ (Lem.~\ref{lem:correction_map}) generated by an instance of the iterated CHSH game that satisfies Condition~\ref{cond:self-test_all_N}.

\begin{lemma}\label{lem:comp_top_sg}
    The closure $\Ess$ of the semigroup $S$ is a compact topological semigroup.
\end{lemma}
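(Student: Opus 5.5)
The plan is to view $S$ as a subset of the finite-dimensional algebra $\mathrm{End}(V_C^*)$ and verify three things: $S$ is bounded, its closure $\Ess$ is again closed under composition, and composition is jointly continuous on $\Ess$. Compactness then follows from Heine--Borel, and these facts together make $\Ess$ a compact topological semigroup.

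Continuity and closure under multiplication are routine. Composition of linear maps on a finite-dimensional space is a polynomial in the matrix entries, so it is jointly continuous on all of $\mathrm{End}(V_C^*)$, and hence on any subset. If $s_m\to s$ and $t_m\to t$ with $s_m,t_m\in S$, then $s_m t_m\to st$. Since $S$ is a semigroup by Lem.~\ref{lem:correction_map}, each $s_m t_m$ lies in $S$, so $st\in\Ess$. Thus $\Ess$ is a closed subsemigroup.

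The main work, and the expected obstacle, is boundedness of $S$. Here I would use Condition~\ref{cond:self-test_all_N} through the identity established in the proof of Lem.~\ref{lem:correction_map}:
\[
\tfrac1{p_{\vec k}}R_{\vec k}\hat\rho(e)(f)=\hat\rho(\vartheta_{\vec k}e)(f).
\]
This holds for all $e\in\Omega_A$ and $f\in\Omega_C$, and by linearity for all $e\in\Span(\Omega_A)$. Since $\hat\rho:V_A\to V_C^*$ is an isomorphism, each $\tfrac1{p_{\vec k}}R_{\vec k}$ is determined by $\tfrac1{p_{\vec k}}R_{\vec k}\hat\rho$. The first step is therefore to check that $V_A=\Span(\Omega_A)$ and that $V_C$ is spanned by $\Omega_C$ in the reduced description, so that pairing against $\Omega_C$ determines elements of $V_C^*$.

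I expect this spanning property to hold only after restricting to the effective GPT. If it does not hold outright, I would instead argue directly. Each normalized element $\tfrac1{p_{\vec k}}R_{\vec k}$ maps the state space $\S$ into the normalized states, because the conditional state is normalized. Since $\S$ is compact (App.~\ref{app:app_to_formalism}) and spans the relevant space (it is generating), the image of a spanning set of states lies in a fixed compact set. This bounds every matrix coefficient with respect to a basis chosen from $\S$, uniformly over $S$.

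Either route yields a uniform bound on $\|\tfrac1{p_{\vec k}}R_{\vec k}\|$. In the first route, $\tfrac1{p_{\vec k}}R_{\vec k}\hat\rho$ ranges over the finite set $\{\hat\rho\circ\vartheta:\vartheta\in D_4\}$, so $S$ is in fact finite modulo the action on the reachable span. In the second route, the bound comes from compactness of $\S$. Hence $\Ess$ is closed and bounded, and therefore compact.
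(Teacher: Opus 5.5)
Your reduction to boundedness of $S$ is right, and your check that $\Ess$ is closed under composition is more explicit than the paper's. But boundedness is the only nontrivial content of the lemma, and neither route establishes it.

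\textbf{Route 1.} It needs $\Span(\Omega_A)=V_A$ and $\Span(\Omega_C)=V_C$. This is false in the cases the lemma must cover, not merely unverified. The reduction of Sec.~\ref{subsec:elim_dofs} only removes $\ker\hat\rho$. Meanwhile $\Span(\Omega_A)=\Span\{\1,e_0,e_1\}$ is three-dimensional, whereas every example in App.~\ref{App:family_members} has local space $\RR^4$, $\RR^6$ or $\RR^8$. There the normalized teleportation maps act nontrivially on the extra directions; for instance, the $\ZZ_4$ generator acts by $-1$ on the third coordinate. Even the effective GPT of Def.~\ref{def:eff_I-CHSH_GPT} is not locally tomographic, and this is one of the paper's main points (Sec.~\ref{subsec:no_all_pancakes}). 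So the identity from Lem.~\ref{lem:correction_map} fixes $\tfrac1{p_{\vec k}}R_{\vec k}\hat\rho$ only on a three-dimensional slice. ``Finite modulo the reachable span'' says nothing about the complementary directions, which is exactly where unboundedness would have to be excluded.

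\textbf{Route 2.} It fails for two reasons.
\begin{enumerate}
\item The same non-tomography means local states cannot supply a basis of $V_C^*$. By App.~\ref{app:app_to_formalism}, $\S^{(1)}$ is compact and generating only in $\Span(\P^{(1)})^*$. In the relevant cases that space is strictly smaller than the space $V_C^*$ on which $R_{\vec k}$ acts (three- versus at least four-dimensional).
\item The normalization claim is wrong. The quantity $p_{\vec k}=R_{\vec k}\hat\rho(\1)(\1)$ is the probability of $\vec k$ when the teleported system is the $C$-half of $\rho$. For another input $\sigma$, the vector $\tfrac1{p_{\vec k}}R_{\vec k}\sigma$ has weight $R_{\vec k}\sigma(\1)/p_{\vec k}$, a ratio of two different outcome probabilities. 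You give no bound on it uniform in $N$, even though $p_{\vec k}=\prod_i p_{k_i}\to 0$.
\end{enumerate}

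\textbf{Missing idea.} Apply the maps only to the one input for which normalization is automatic, namely half of $\rho$ itself, and use the bipartite state space, which sees all the degrees of freedom.
\begin{itemize}
\item Each $\tfrac1{p_{\vec k}}R_{\vec k}\hat\rho$ pairs to $1$ with $\1\otimes\1$, so $S\hat\rho\subseteq\S^{(2)}$, which is compact.
\item Since $\hat\rho:V_A\to V_C^*$ is a linear isomorphism, $s\mapsto s\hat\rho$ is a homeomorphism.
\item Hence $\Ess=\overline{S\hat\rho}\,\hat\rho^{-1}$ is compact.
\end{itemize}
This is the paper's argument. Note that it relies on the full bipartite cone, not on product pairings $e\otimes f$, which would fail for the same reason Route 1 does.
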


\begin{proof}
	The elements of $\Ess$ are linear maps on a finite-dimensional vector space. 
    Hence their composition is continuous.
	Therefore $\Ess$ is a topological semigroup.
    $\Ess$ is closed by construction.
	The elements of $\Ess\hat\rho$ are all states, as their pairing with $\1^{\otimes 2}$ is $1$.
	Hence $\Ess\hat\rho$ is a closed subset of the state space $\S^{(2)}$, which is a compact set (c.f.\ App.~\ref{app:app_to_formalism}).
    Thus $\Ess\hat\rho$ is a compact set.
	Finally, $\Ess$ is simply the image of $\Ess\hat\rho$ under the right action of $\hat\rho^{-1}$, which implies $\Ess$ is compact.
\end{proof}

To proceed, we need the following result which holds for compact topological semigroups (in general) and hence in particular for $\Ess$.
The result is a direct corollary of Theorem~16.3 and Lemma~16.1 of \cite{eisner2015operator}.

\begin{theorem}\label{thm:ellis}
	The semigroup $\Ess$ contains a minimal right-ideal which, in turn, contains at least one idempotent $P$.
	Moreover, $\Gee_P:=P \Ess P$ is a compact topological group, with neutral element $P$.
\end{theorem}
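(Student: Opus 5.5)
The plan is to prove the statement directly from the structure theory of compact semigroups: first get a minimal right ideal by compactness, then an idempotent in it by the Ellis--Numakura lemma, and finally show that $P\Ess P$ is a group whose inverse map is continuous. Throughout, $\Ess$ is a compact Hausdorff topological semigroup by Lem.~\ref{lem:comp_top_sg}. Its topology comes from the finite-dimensional space of linear maps, so composition is jointly continuous.

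\emph{Existence of a minimal right ideal.} For each $s\in\Ess$ the set $s\Ess$ is a closed right ideal, being the continuous image of a compact set. Right ideals ordered by inclusion satisfy the hypotheses of Zorn's lemma. Given a chain of closed right ideals, any finite subfamily has nonempty intersection, and by compactness so does the whole chain. That intersection is again a closed right ideal, hence a lower bound for the chain. Zorn's lemma therefore gives a minimal closed right ideal $R$. It is minimal among all right ideals: if $J\subseteq R$ is any right ideal and $s\in J$, then $s\Ess\subseteq J$ is a closed right ideal, so $s\Ess=R$.

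\emph{Idempotent.} The set $R$ is itself a compact subsemigroup. I apply the Ellis--Numakura lemma to it: take a minimal nonempty closed subsemigroup $T\subseteq R$, which exists by Zorn and compactness, and pick $x\in T$. Then $xT$ is a closed subsemigroup contained in $T$, so $xT=T$. Consequently $\{t\in T: xt=x\}$ is nonempty, closed, and a subsemigroup, and minimality forces it to equal $T$. In particular $x\cdot x=x$, so $P:=x$ is an idempotent in $R$.

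\emph{Group structure of $P\Ess P$.} Since $P$ is idempotent, $P$ acts as the identity on $\Gee_P=P\Ess P$ from both sides, so $\Gee_P$ is a monoid with neutral element $P$. It is compact because it is the image of $\Ess$ under the continuous map $s\mapsto PsP$.
\begin{itemize}
\item \emph{Right inverses.} Take $g=PsP$. Then $g\Ess\subseteq P\Ess\subseteq R$ is a right ideal, so minimality of $R$ gives $g\Ess=R=P\Ess$. Hence $P=gt$ for some $t\in\Ess$. It follows that $P=PgtP=g(PtP)$, because $gP=g$ and $Pg=g$. So every element of $\Gee_P$ has a right inverse in $\Gee_P$.
\item \emph{Two-sided inverses.} A monoid in which every element has a right inverse is a group, by the standard algebraic argument.
\item \emph{Continuity of inversion.} This is the main obstacle. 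Let $g_\alpha\to g$ be a convergent net in $\Gee_P$. By compactness every subnet of $(g_\alpha^{-1})$ has a further subnet converging to some $h$. Continuity of multiplication gives $gh=P$, so $h=g^{-1}$. Since every convergent sub-subnet has the same limit, $g_\alpha^{-1}\to g^{-1}$.
\end{itemize}
In our setting one could shortcut this last step by using matrix inversion on $\img P$, but the compactness argument suffices and requires nothing beyond Lem.~\ref{lem:comp_top_sg}.
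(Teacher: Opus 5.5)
Your proof is correct, and it takes a different route from the paper only in that it is self-contained. The paper gives no argument of its own: it treats the statement as a special case of the general structure theory of compact topological semigroups and cites Theorem~16.3 and Lemma~16.1 of Eisner et al. You instead reprove the needed pieces directly, along the lines of the standard arguments behind such results:
\begin{enumerate}
\item a minimal right ideal, via Zorn's lemma and the finite intersection property;
\item an idempotent in it, via the Ellis--Numakura argument;
\item the group property of $\Gee_P=P\Ess P$, by using minimality of $R$ to get right inverses and hence two-sided inverses;
\item continuity of inversion, from compactness together with joint continuity of composition.
\end{enumerate}
What your route buys is transparency. It shows that only three properties of $\Ess$ are used: compactness, the Hausdorff property, and jointly continuous composition, all supplied by Lem.~\ref{lem:comp_top_sg} and finite-dimensionality. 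It also makes visible that the right-inverse step works for every idempotent in every minimal right ideal, not just the one you constructed. Joint continuity is exactly what makes your last step easy. With only separately continuous multiplication, the subnet argument breaks, and one would need a much deeper result such as Ellis' joint-continuity theorem. The citation, in turn, is shorter and ties into the full structure theory of the minimal ideal, for example that the groups $P\Ess P$ for different minimal idempotents are isomorphic. That theory is the natural backing for the paper's remark that nothing later depends on the choice of $P$. Your proof does not address that remark, but it is not part of the statement.
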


For the remainder of the discussion, we fix one such idempotent $P$.
The final result will not depend on which one has been chosen.

\begin{lemma}\label{lem:compact_grp_has_pos_char}
    The character $\chi_P$ afforded by $\Gee_P$ is non-negative. 
\end{lemma}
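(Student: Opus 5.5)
The plan is to show that every element of $\Gee_P$ is, up to the idempotent $P$, a positive map, and that the trace of a positive map restricted to its image sits inside a cone which the rest of the construction can relate to a state/effect pairing. Concretely, the character is $\chi_P(g) := \tr(g)$ for $g \in \Gee_P$, viewed as operators on $\img P \subseteq V_C^*$; since $g = PgP$, this trace equals the trace of $g$ on all of $V_C^*$. So it suffices to show $\tr(g) \geq 0$ for all $g \in \Ess$ (or at least for $g \in P\Ess P$).

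\textbf{Step 1: traces of elements of $S$ are probabilities.} For $s = \tfrac1{p_{\vec k}}R_{\vec k} = \tfrac1{p_{\vec k}} \hat\rho\hat\phi_{k_1}\cdots\hat\rho\hat\phi_{k_N}$, the quantity $\tr(R_{\vec k})$ is exactly the ``closed loop'' pairing of Fig.~\ref{fig:i-ent-swap}~(b) with all $\Phi_i = \phi_{k_i} \in \M$. Such a loop is a legitimate experiment in the effective GPT (this is precisely why the effective GPT was defined to include Fig.~\ref{fig:i-ent-swap}~(b)), so $\tr(R_{\vec k})$ is the probability of obtaining outcomes $\vec k$ in that experiment and hence is $\geq 0$. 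Dividing by $p_{\vec k} > 0$ (guaranteed by Lem.~\ref{lem:correction_map}, which gives $p_{\vec k} = \prod p_{k_i} \neq 0$) gives $\tr(s) \geq 0$ for all $s \in S$.

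\textbf{Step 2: pass to the closure.} The trace is a continuous linear functional on the finite-dimensional space of linear maps, so $\tr(s)\geq 0$ on $S$ extends to $\tr(g) \geq 0$ on $\Ess = \overline S$. Since $\Gee_P = P\Ess P \subseteq \Ess$ (as $\Ess$ is a semigroup and $P \in \Ess$), every $g \in \Gee_P$ satisfies $\tr(g) \ge 0$.

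\textbf{Step 3: identify the trace with the character.} Finally, decompose $V_C^* = \img P \oplus \ker P$. Each $g \in \Gee_P$ satisfies $g = PgP$, so it maps $\img P$ to itself and vanishes on $\ker P$. Hence $\tr_{V_C^*}(g) = \tr_{\img P}(g|_{\img P}) = \chi_P(g)$, where the restriction is the representation of the group $\Gee_P$ with neutral element $P$, acting as the identity on $\img P$.

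The main obstacle is Step 1: justifying carefully that the loop contraction $\tr(\hat\rho\hat\phi_{k_1}\cdots\hat\rho\hat\phi_{k_N})$ really is a probability in the theory. This requires that states and effects, after the reduction to $V_A, V_C$, compose in a cyclic pattern to a valid experiment. I would appeal to the GPT closure containing all such wirings, together with the lemma showing that the reduction preserves exactly these traces. If one is worried about the loop being admissible, I would instead write $\tr(R_{\vec k})$ as the pairing of the state $\rho^{\otimes N}$ with the effect $\phi_{k_1}\otimes\cdots\otimes\phi_{k_N}$ with shifted tensor legs, which is nonnegative by positivity of product effects on product states.
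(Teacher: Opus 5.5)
Your argument is correct, but it takes a more direct route than the paper.

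\textbf{The paper's route.} It passes to the limit at the level of states: $g\hat\rho\in\S^{(2)}$ for every $g\in\Gee_P$, by closedness of $\S^{(2)}$, so the loop pairing $\tr(g\hat\rho\hat\phi_k)=\tr(gR_k)$ is non-negative. It then uses the group structure of $\Gee_P$ to reach an arbitrary character value. Writing $PR_kP=p_k h$ with $h\in\Gee_P$ and choosing $g=g'h^{-1}$, it obtains $\chi_P(g')=\tfrac{1}{p_k}\tr(g\,PR_kP)\geq 0$.

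\textbf{Your route.} You observe that every element of $S$ already ends in some $R_{k_N}=\hat\rho\hat\phi_{k_N}$. Its trace is therefore itself a rescaled closed-loop probability of Fig.~\ref{fig:i-ent-swap}~(b), and you take the limit at the level of the continuous functional $\tr$.

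\textbf{Comparison.} Your version needs neither the inverse $h^{-1}$ nor any property of $P$ beyond $P\in\Ess$ and $g=PgP$. It also gives the stronger statement that $\tr\geq 0$ on all of $\Ess$, not only on $\Gee_P$. The positivity input is identical in both proofs. The paper's quantity $\tr(\tfrac{1}{p_{\vec k}}R_{\vec k}\hat\rho\hat\phi_k)=\tfrac{1}{p_{\vec k}}\tr(R_{\vec k}R_k)$ is exactly your $(N+1)$-fold loop. So both rest on the admissibility of the loop experiments that the paper asserts when setting up the effective GPT, and neither needs more than that.

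\textbf{Caveat on your fallback.} Rewriting $\tr(R_{\vec k})$ as $\rho^{\otimes N}$ paired with $\phi_{k_1}\otimes\cdots\otimes\phi_{k_N}$ on shifted legs does not reduce to ``product effects on product states.'' After the cyclic shift, each $\phi_{k_i}$ straddles two different copies of $\rho$, and the last one wraps from the final tensor factor to the first. Its non-negativity is again just the admissibility of the loop wiring. Since the paper explicitly drops permutation invariance, this does not follow from the tensor-product structure alone. You don't need the fallback, though: your main argument already stands on the same footing as the paper's.
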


\begin{proof}
    For every outcome $\vec k$, $\tfrac1{p_{\vec k}}R_{\vec k} \hat\rho$ is an element of the bipartite state space $\S^{(2)}$.
    Every $g \in \Gee_P$ arises as the limit of operations of the form $\tfrac1{p_{\vec k}}R_{\vec k}$. 
    Thus, since $\S^{(2)}$ is closed, it follows that $g\hat\rho$ is also an element of $\S^{(2)}$.
    Therefore, fixing any $\phi_k \in \M$ we have
	\begin{align*}
        \tr\big(g \hat\rho\, \hat\phi_k \big) \geq 0.
	\end{align*}
    Using the fact that $P$ is the identity of $\Gee_P$ we get 
    \begin{align*}
        \tr\big(g \hat\rho \, \hat\phi_k \big)
		&
		=  \tr\big((P g P) \, R_k \big)
		= \tr\big(g \, (P R_k P) \big).
    \end{align*}
    But $P R_k P$ is of the form $p_k\, h$ for some $h \in \Gee_P$.
    Thus, for any $g'\in \Gee$, choosing $g:=g' h^{-1}$ gives
    \begin{align*}
        \chi_P(g') 
        = \chi_P(g h) 
        = p_k \tr(g P R_k P) \geq 0.
    \end{align*}
\end{proof}

Now we turn our attention back to the correction map $\vartheta : S \to D_4$.
The map may be extended to $\Ess$ by continuity.

Indeed,
if a pair $\mathcal{R}, \mathcal{R}' \in S$ are sufficiently close together,
then the respective corrections $\vartheta, \vartheta'$ must be the same.
This holds because there is a unique correction that attains the CHSH value ($4a$),
while any other correction will cause the CHSH value to deviate from its maximum by at least $2a>0$.
But the CHSH value is a continuous function on bipartite states and hence a continuous function of $\mathcal{R}$.

\emph{Remark:} It follows that $\Ess$ decomposes into a set of disconnected components, on which the correction map is constant.    

Restricting this map to the compact group $\Gee_P \subseteq \Ess$, we obtain a group homomorphism from $\Gee_P$ into $D_4$.
We will use the same letter, $\vartheta$, for the map $\Gee_P\to D_4$.

Since the image of $\vartheta$ is $H \subseteq D_4$, we have
\begin{align}\label{eqn:correction_grp_def}
    \Gee_P / \ker\vartheta \cong H.
\end{align}
Let $\mu$ be the Haar measure on $\ker \vartheta$.
Then 
\begin{align}\label{eqn:haar}
     \Pi_\varphi := \int_{g \in \ker \vartheta}d\mu(g)\,g
\end{align}
is a projection onto the trivial representation of the normal subgroup $\ker\vartheta \trianglelefteq \Gee_P$. 
Let $g \in \Gee_P$ and define the map
\begin{align*}
    \varphi : H \to \L(\img \Pi_\varphi),\ \vartheta(g) \mapsto \Pi_\varphi g \Pi_\varphi.
\end{align*}

\begin{theorem}\label{thm:the_trace_positive_rep}
	The map $\varphi$ is a representation of the group $H$ with the following properties:
    \begin{enumerate}
				\item 
					Its character $\chi_\varphi$ is non-negative.
        \item 
					The trivial representation has multiplicity 1 in the irrep decomposition of $\varphi$.
    \end{enumerate}
\end{theorem}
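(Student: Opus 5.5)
We first check that $\varphi$ is well defined and is a representation, then read off the character from Lemma~\ref{lem:compact_grp_has_pos_char}, and finally bound the multiplicity of the trivial representation using the CHSH data.

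\emph{Well-definedness and homomorphism property.} Write $K := \ker\vartheta$. Since $K$ is normal in $\Gee_P$, invariance of the Haar measure gives $g\Pi_\varphi g^{-1} = \Pi_\varphi$ for every $g \in \Gee_P$. Hence every $g$ commutes with $\Pi_\varphi$. It follows that $\Pi_\varphi g \Pi_\varphi = g\Pi_\varphi$, which depends only on the coset $gK$, because $k\Pi_\varphi = \Pi_\varphi$ for $k \in K$. By Eqn.~(\ref{eqn:correction_grp_def}), $\varphi$ is then a well-defined map on $H$. Multiplicativity follows from $g\Pi_\varphi\, h\Pi_\varphi = gh\Pi_\varphi^2 = gh\Pi_\varphi$, and $P\Pi_\varphi = \Pi_\varphi$ is the identity on $\img\Pi_\varphi$.

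\emph{Property 1 (non-negative character).} Since $\Pi_\varphi$ commutes with $g$, we have $\chi_\varphi(\vartheta(g)) = \tr(g\Pi_\varphi) = \int_K d\mu(k)\,\chi_P(gk)$. Each integrand is non-negative by Lemma~\ref{lem:compact_grp_has_pos_char}, so the average is non-negative.

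\emph{Property 2 (trivial multiplicity one).} This is the main step. The $\varphi$-invariant subspace of $\img\Pi_\varphi$ is the image of the full Haar projection $\Pi_0 := \int_{\Gee_P} d\mu\, g$. We must therefore show $\operatorname{rank}\Pi_0 = 1$.

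For the lower bound, $\Pi_0 \neq 0$. Indeed, $\Pi_0$ is a limit of convex combinations of normalized teleportation maps, so $\Pi_0\hat\rho$ is a state and $\tr(\Pi_0\hat\rho\hat\phi_k)$-type pairings cannot all vanish. Concretely, the unit effect is preserved: $\Pi_0\hat\rho(\1)(\1) = 1$.

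For the upper bound, take any $x$ in $\img\Pi_0$. Then $x$ is invariant under all of $\Gee_P$, hence under $g$ with arbitrary correction $\vartheta(g) = h \in H$. Pulled back through the isomorphism $\hat\rho$ (and $P$), an invariant vector corresponds to an $A$-side functional $e$ with $\hat\rho(\vartheta(g)e) = g\hat\rho(e)$ on the relevant subspace. Averaging $\hat\rho(\vartheta(g)^{-1}e)(f)$ over $g$ and using the relation from the proof of Lemma~\ref{lem:correction_map}, the invariant part of the state pairs with the CHSH effects as $\hat\rho(\bar e)(f)$, where $\bar e$ is the $H$-average of $e$.

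Now use the self-testing conditions of Lemma~\ref{lem:sufficient_st_conditions}. They say that $\hat\rho$ restricted to $\Span\{\1,e_0,e_1\}$ pairs with $\Span\{\1,f_0,f_1\}$ with uniform marginals. The observables $A_i$ lie in a nontrivial, sign-changing representation of the relabelling group, so their averages over any nontrivial correction subgroup vanish; only $\1$ survives. Combining this with the fact that the effective GPT is spanned by iterated pairings of $\hat\rho$ with $\M$ and $\Omega_C\otimes\Omega_A$, we aim to conclude that every $\Gee_P$-invariant vector is proportional to $P\hat\rho(\1)$.

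We expect this last step to be the main obstacle. Invariant vectors must be tied back to the specific, normalization-fixed direction $P\hat\rho(\1)$. Our first attempt will be a compactness and extremality argument: the invariant states of the compact group $\Gee_P$ form a face, and uniform marginals force this face to be a single point. This yields multiplicity exactly one.
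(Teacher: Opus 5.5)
Your well-definedness argument and Property~1 are correct; the latter is exactly the paper's computation $\tr(g\Pi_\varphi)=\int_{\ker\vartheta}\chi_P(gk)\,d\mu(k)\geq 0$. Your lower bound $\Pi_0\neq 0$, from $\Pi_0\hat\rho(\1)(\1)=1$, is also correct. The gap is the upper bound in Property~2. You never show that the $\varphi(H)$-invariant subspace is at most one-dimensional, and the route you sketch cannot deliver this. The self-testing conditions only constrain how vectors of $\img\Pi_\varphi\subseteq V_C^*$ pair with the three-dimensional span of $\{\1,f_0,f_1\}$. An invariant vector in the annihilator of that span is invisible to all CHSH data. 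Concretely, append to the $K_4$ example of App.~\ref{App:family_members} a fifth coordinate on which every group element acts trivially. This leaves all CHSH pairings and corrections unchanged, but the trivial multiplicity becomes two. What breaks is the normalization $\sum_k\phi_k=\1\otimes\1$. Two of your auxiliary claims are also false as stated. The average of $A_0$ over $H=\langle\eta\rangle$ is $\tfrac12(A_0-A_1)\neq 0$. And the invariant states of a compact group acting on a state space need not form a face: the maximally mixed qubit state is the unique unitarily invariant state and lies in the interior.

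The missing ingredient is precisely the completeness of the Bobs' measurement. It gives $\sum_k R_k=\hat\rho\,\widehat{\1\otimes\1}$, which is the rank-one map $\sigma\mapsto\sigma(\1)\,\hat\rho(\1)$. Set $g_k:=P\tfrac1{p_k}R_kP\in\Gee_P$, and use $P\Pi_\varphi=\Pi_\varphi P=\Pi_\varphi$ together with $\sum_k p_k=1$. Then every $v\in\img\Pi_\varphi$ fixed by $\varphi(H)$ satisfies
\[
v=\sum_k p_k\,\Pi_\varphi g_k\Pi_\varphi v=\Pi_\varphi\Big(\sum_k R_k\Big)\Pi_\varphi v=v(\1)\,\Pi_\varphi\hat\rho(\1).
\]
So the trivial isotypic component lies in the line spanned by $\Pi_\varphi\hat\rho(\1)$. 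This is the paper's argument, and together with your lower bound it gives multiplicity exactly one.
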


Both properties relate directly to the physics of the situation.
Regarding the first point, it turns out that the values of the character can be expressed as a pairing between states and effects, and is thus non-negative.
The proof of the second fact relates the rank of the projection onto the trivial representation to the rank of 
the marginal state shared by Alice and Charlie.
The marginal state is of rank one because it factorizes by construction.

\begin{proof}
	The map $\varphi$ is the restriction of $\Gee_P$ to the subspace that is invariant under $\ker \vartheta \trianglelefteq \Gee_P$.
	Arguing as in the proof of Lemma~\ref{lem:compact_grp_has_pos_char}, it holds that $g\rho\in \S^{(2)}$ for every $g$ that appears in the integral in Eq.~(\ref{eqn:haar}).
	Because the Haar measure is normalized and $\S^{(2)}$ a convex set, 
	$\Pi_\varphi \hat\rho$ is a state.
	In particular, this means that $\img(\Pi_\varphi)$ is non-trivial.
	Thus it is a representation of the quotient group $H$.

    For (1) we first choose a transversal of $\Gee_P/\ker \vartheta$ in order to compute the character.
    Indeed by Lem.~\ref{lem:coarse-graining}, every element of $\M$ correspond to a unique correction in $H$, and thus the set
    \begin{align*}
        \big\{ g_k := P\tfrac1{p_k}R_kP \ | \ k \in [n] \big\}
    \end{align*}
    is in one-one correspondence with the elements of $H$, and hence is a transversal of $\Gee_P/\ker\vartheta$.
    Now, for any $k \in [n]$ we can compute
    \begin{align*}
        \chi_\varphi(k) 
        = \tr(\Pi_\varphi g_k \Pi_\varphi) 
		&
        = \tr(g_k \Pi_\varphi)
        \\
		&
        = \int_{l \in \ker\vartheta}d\mu(l)\chi_P(g_k l) \geq 0,
    \end{align*}
    where we have used that the character $\chi_P$ is non-negative (Lem.~\ref{lem:compact_grp_has_pos_char}).

    We prove (2) by bounding the multiplicity of the trivial representation above and below by one.
    The lower bound always holds for a non-negative character of a finite group.
    This is because the character inner product of any non-negative character $\chi$ with the trivial character $\chi_1$ yields 
    \begin{align*}
        [\chi,\chi_1] 
        = \tfrac1{|G|} \sum_{g \in G} \chi_1(g)^*\chi(g)
        = \tfrac1{|G|} \sum_{g \in G} \chi(g) > 0,
    \end{align*}
    since $\chi(\Id) > 0$.

    For the upper bound, we show that (i) The trivial representation is a subspace of the $+1$ eigenspace of the map
    \begin{align*}
        \sum_{k \in [n]} p_k \Pi_\varphi g_k \Pi_\varphi,
    \end{align*}
    and that (ii) The rank of this map is upper bounded by 1.
    Indeed, by definition, for any $v$ in the trivial representation, we have $\forall k \in [n], \Pi_\varphi g_k \Pi_\varphi \, v = v$, and thus 
    \begin{align*}
        \bigg(\sum_{k \in [n]} p_k \Pi_\varphi g_k \Pi_\varphi \bigg) \, v 
        = \sum_{k \in [n]} p_k \, v 
        = v,
    \end{align*}
    hence (i) follows.
    For (ii), using $\Pi_\varphi P = P \Pi_\varphi = \Pi_\varphi$, which is a consequence of the definition of $\Pi_\varphi$ (Eqn.~(\ref{eqn:haar})), and the fact that $\forall g \in \Gee_P, PgP = g$ and $P^2 = P$, 
    we expand the map: 
    \begin{align*}
        \sum_{k \in [n]} p_k \Pi_\varphi g_k \Pi_\varphi
        &
        = \sum_{k \in [n]} p_k \Pi_\varphi P\tfrac1{p_k}R_kP \Pi_\varphi
        \\
        &
        = \sum_{k \in [n]} \Pi_\varphi R_k \Pi_\varphi
        \\
        &
        = \Pi_\varphi \bigg(\sum_{k \in [n]}  R_k \bigg) \Pi_\varphi
        \\
        &
        = \Pi_\varphi \bigg(\sum_{k \in [n]} \hat\rho\hat\phi_k \bigg) \Pi_\varphi
        \\
        &
        = \Pi_\varphi \hat\rho \bigg(\sum_{k \in [n]} \hat\phi_k \bigg) \Pi_\varphi
        \\
        &
        = \Pi_\varphi \big(\rho(\1, \cdot) \1 \big) \Pi_\varphi.
    \end{align*}
    The map
    \begin{align*}
        \rho(\1, \cdot) \1 : V_C^* \to V_C^*,\  \sigma \mapsto \sigma(\1) \rho(\1, \cdot)
    \end{align*}
    is manifestly of rank one, and hence $\Pi_\varphi \big(\rho(\1, \cdot) \1 \big) \Pi_\varphi$ is of rank at most one.
    Thus (ii) and (2) follow.
\end{proof}

There are two dual ways to describe entanglement swapping experiments:
Either 
as teleporting $C$-type states,
or 
as teleporting $A$-type effects
(c.f.\ Fig.~\ref{fig:right-to-left}).
The map $R_k$ used so far formalizes the former point of view.
For the latter, define
\begin{align*}
	L_k := \hat\phi_k \hat\rho : V_A \to V_A.
\end{align*}
Extending this definition to $L_{\vec k}$ (analogously to $R_{\vec k}$), we find that $L_{\vec k}$ and $R_{\vec k}$ are conjugate to each other with respect to $\hat\rho$:
\begin{align*}
    L_{\vec k} = \hat\rho^{-1} R_{\vec k} \hat\rho.
\end{align*}
This allows us to replicate the results for subsystems of $C$-type, on subsystems of $A$-type as follows:
Conjugating $\Ess$ by $\hat\rho$ gives us Lem.~\ref{lem:comp_top_sg} for the semigroup generated by elements of the form $\tfrac1{p_k}L_k$.
Next, since $P$ is obtained as a convergent sequence of elements of the form $\tfrac1{p_{\vec k}} R_{\vec k}$, the map $\hat\rho^{-1}P\hat\rho$ is a convergent sequence of elements of the form $\tfrac1{p_{\vec k}} L_{\vec k}$.
Thus we also get Thm.~\ref{thm:ellis} on subsystems of $A$-type, where the compact group is $\hat\rho^{-1}\Gee_P\hat\rho$.
The rest of the results -- Lem.~\ref{lem:compact_grp_has_pos_char} and Thm.~\ref{thm:the_trace_positive_rep} -- follow by invariance under conjugation.
Thus we get a representation of $H$ on the subsystems of $A$-type equivalent to the representation $\varphi$ as follows: For $h \in H$
\begin{align*}
    \pi(h) := \hat\rho^{-1} \varphi(h) \hat\rho.
\end{align*}
The corresponding representation space is supported on the image of the projection $\Pi_\pi := \hat\rho^{-1} \Pi_\varphi \hat\rho$.

\begin{figure}[h]
    \centering
    \includegraphics[trim={0 3cm 0 3cm}, clip, width=\linewidth, keepaspectratio]{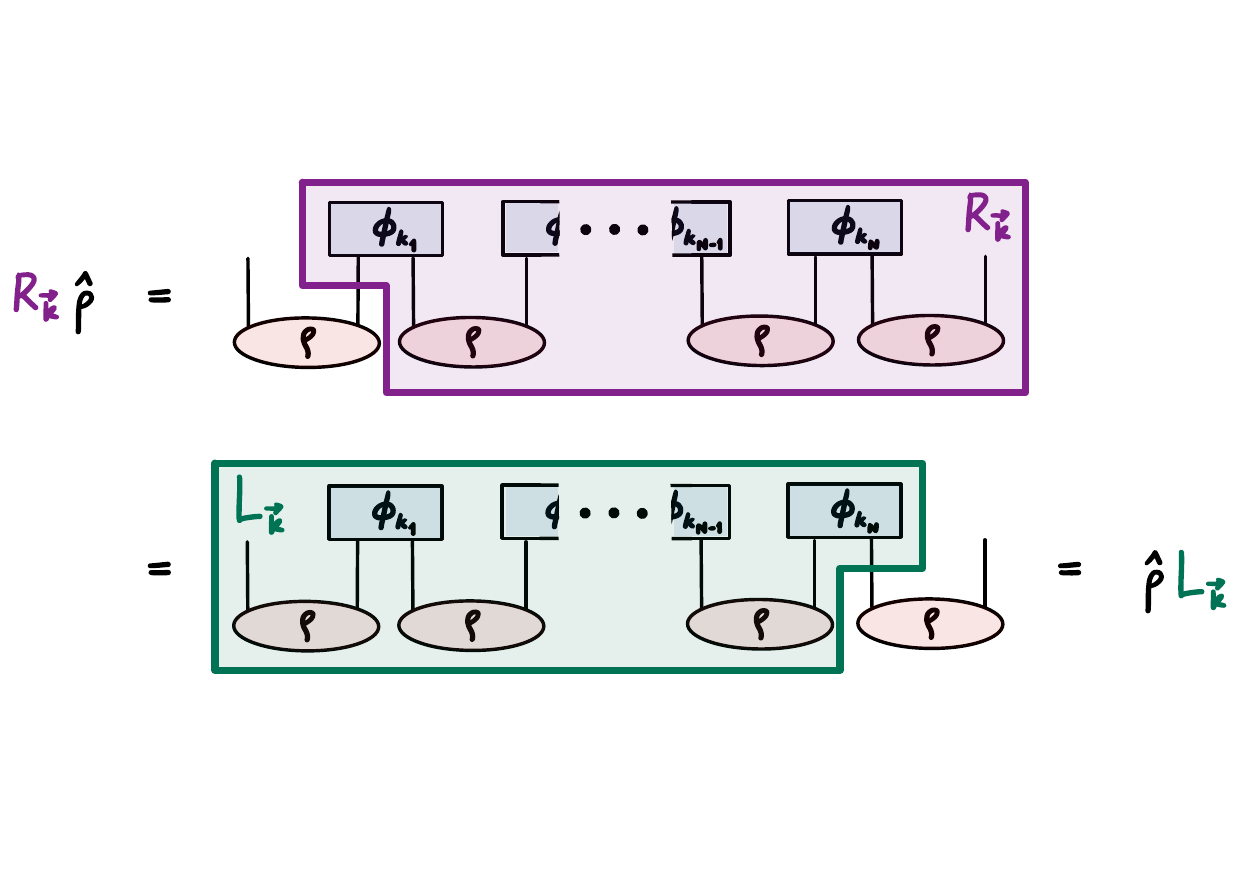}
    \caption{The two ways to read the result of entanglement swapping.
    One can either read it as first applying the map $\hat\rho : V_A \to V_C^*$ and then the map $R_{\vec k} : V_C^* \to V_C^*$ or first the map $L_{\vec k} : V_A \to V_A$ and then the map $\hat\rho : V_A \to V_C^*$.}
    \label{fig:right-to-left}
\end{figure}

Now that we have a representation space both on $A$-type and $C$-type subsystems, we can construct the effective GPT of the iterated CHSH game.


\subsection{The effective Iterated CHSH GPT}\label{subsec:eff_I_CHSH_GPT}

Mimicking the approach of Sec.~\ref{sec:GPT_self-testing}, 
here we define the \emph{effective Iterated CHSH GPT}
as the GPT generated by the states and effects that are required to realize the iterated CHSH game.
A representation-theoretic analysis will show that the effective GPTs that arise this way fall into one of seven inequivalent families.

\begin{definition}[The effective I-CHSH GPT]\label{def:eff_I-CHSH_GPT}
    Let $\rho, \phi_k, e_i, f_j$ be an instance of the iterated CHSH game satisfying Condition~\ref{cond:self-test_all_N}.
    Let $H$ be the associated correction group (Eqn.~(\ref{eqn:correction_grp_def})).
    Let $\Omega_A := \{ e_i, \neg e_i\}_{i = 0,1}$, $\Omega_C := \{ f_j, \neg f_j\}_{j = 0,1}$.
    Let $\tilde\rho$ be the bilinear form associated with the map $\Pi_\varphi \hat\rho \Pi_\pi$.
    Define the sets 
    \begin{itemize}
        \item $\tilde \Omega_A := \Pi_\pi \Omega_A,\ \tilde \Omega_C := \Pi_\phi^t \Omega_C$,
        \item $\Psi := \big\{ \varphi(h)\Pi_\varphi \hat\rho \Pi_\pi \big\}_{h \in H}$,
        \item $\Phi := \big\{ \Pi_\pi \hat\phi_k \Pi_\varphi \big\}_{k \in [n]}$.
    \end{itemize}
	Then the \emph{effective I-CHSH GPT} is the GPT closure (in the sense of Sec.~\ref{subsubsec:generating_a_GPT}) 
    of the input: $P = \cone\big(\Phi \cup \tilde\Omega_C \otimes \tilde\Omega_A \big)$, $D = \cone\big( \Psi \cup \tilde\rho(\cdot, \tilde\Omega_C) \otimes \tilde\rho(\tilde\Omega_A, \cdot) \big)$.
\end{definition}

\emph{Remarks:}
(1) In the above definition, $D \subseteq \img(\Pi_\pi)^* \otimes \img(\Pi_\varphi)$ and $P \subseteq \img(\Pi_\varphi)^* \otimes \img(\Pi_\pi)$. 
(2) effective I-CHSH GPT will turn out to be \emph{not} locally tomographic. 
We discuss this further in Sec.~\ref{subsec:no_all_pancakes}.


\subsection{The seven families}\label{subsec:5+2_families}

The representation $\varphi$ of the correction group $H$,
and hence the effective I-CHSH GPT,
are subject to the following constraints:
\begin{enumerate}
    \item $\varphi$ has dimension at least three. 
    This holds because we need local dimension at least 3 to violate CHSH.
	\item The character of the representation $\chi_\varphi$ has to be non-negative (Thm.~\ref{thm:the_trace_positive_rep}).
	\item The trivial character $\chi_1$ occurs exactly once in the decomposition of $\chi_\varphi$ (Thm.~\ref{thm:the_trace_positive_rep}).
\end{enumerate}

The correction group $H$ is one of the following groups: The trivial group, $\ZZ_2$ (the cyclic group of order two), $\ZZ_4$ (the cyclic group of order four), $K_4$ (the Klein four-group), and $D_4$ (the dihedral group of order eight).
Of these, we can rule out the trivial group and $\ZZ_2$ for the following reasons:
\begin{itemize}
    \item The only irreducible character afforded by any representation of the trivial group is the trivial character $\chi_1$.
    Due to constraint~(3), $\chi_\varphi = \chi_1$ is the only option. 
    This representation is one-dimensional and hence ruled out by constraint~(1).

    \item The irreducible characters of $\ZZ_2$ are the trivial character $\chi_1$ and $\chi_2 = (1,\ -1)$.
    Thus the only possibilities given constraints~(2)~and~(3) are $\chi_\varphi = \chi_1$ and $\chi_\varphi = \chi_1 + \chi_2$.
    Both are ruled out by constraint~(1).
\end{itemize}

The remaining groups -- $\ZZ_4$, $K_4$ and $D_4$ -- all contribute solutions.
The character tables for these groups are recorded in App.~\ref{app:D_4}.

Let $\Xi$ represent the character table of any of the three groups.
The above three constraints correspond to the following system of Diophantine inequalities in the rows of $\Xi$:
Let $r = \mathrm{rows}(\Xi)$, and $n_i \in \NN$ for $i = [r]$, then 
\begin{enumerate}
    \item $\sum_{i = 1}^r n_i \, \Xi_{i1} \geq 3$,
    \item $\forall j \in [r],  \sum_{i = 1}^r n_i \Xi_{ij} \geq 0$,
    \item $n_1 = 1$.
\end{enumerate}
In the present case, this system of inequalities was solved using the SageMath computer algebra system \cite{sagemath} (the computer code can be found here \cite{zenodo}).
The result is as follows:

\begin{result}\label{res:classification}
    Let $\{ \chi_{i}^{(H)} \}_i$ (refer App.~\ref{app:D_4}) be the irreducible characters respectively of $H = \ZZ_4, K_4$, and $D_4$.
    Then, using the convention 
    \begin{align*}
        \chi_{(i_1)^{j_1} \cdots (i_n)^{j_n}} = j_1\chi_1 + \cdots + j_n\chi_n,
    \end{align*}
    $\chi_{\varphi}$ has to be one of the following seven characters:
    \begin{itemize}
        \item $\dim = 4 : \chi_{1234}^{(\ZZ_4)}, \chi_{1234}^{(K_4)}, \chi_{125}^{(D_4)}, \chi_{135}^{(D_4)}, \chi_{145}^{(D_4)}$, where the first two are the regular character of $\ZZ_4$ and $K_4$ respectively;
        \item $\dim = 6 : \chi_{12345}^{(D_4)}$;
        \item $\dim = 8 : \chi_{12345^2}^{(D_4)}$, the regular character of $D_4$.
    \end{itemize}    
    In $\dim = 4$, the computer algebra system reports an additional solution, namely $\chi_{1234}^{(D_4)}$.
    The representation affording this character is isomorphic to the one affording $\chi_{1234}^{(K_4)}$, and thus is not listed above.
\end{result}

Result~\ref{res:classification} states that any effective I-CHSH GPT necessarily falls into one of seven families.
It is not a priori clear that all seven of these families are inhabited, i.e., it is possible to construct an example of each type.
Resolving this question in the affirmative, we construct explicit GPTs for each family in Appendix~\ref{App:family_members}.


\subsection{Quantum family members}\label{subsec:Q_family}

In this section we collect some positive and negative results regarding which of the families in Result~\ref{res:classification} can be realized within quantum theory.

In quantum mechanics, (i) one can realize an iterated CHSH game that maintains QM's CHSH value of $2\sqrt2$.
Moreover, (ii) any such realization satisfies Condition~\ref{cond:self-test_all_N}.
Indeed, for (i), consider the instance where Alice measures in Pauli $X$ and $Z$ bases, Charlie measures in bases that is rotated by $\pi/4$ in the Pauli $X$--$Z$-plane, the shared bipartite state is the Bell state
\begin{align*}
    |\Phi^+\rangle = \tfrac1{\sqrt2} (|00\rangle + |11\rangle),
\end{align*}
and the measurement $\M$ is the Bell basis measurement, i.e., the measurement in the basis $\{ \sigma_k \otimes \Id |\Phi^+\rangle \}_{k = 0, \dots, 3}$, with $\sigma_0 = \Id$ and $\sigma_k, \ k = 1, \dots, 3$ being Pauli $X,Y$ and $Z$ respectively.

In this instance, the bipartite state conditioned on outcome $k$ of $\M$ corresponds to the the element of the Bell basis labelled by $k$.
Every element of the Bell basis can be mapped back to $|\Phi^+\rangle$ by Alice, by applying the appropriate Pauli correction.
This correction corresponds to an action of $\sigma_k$ by conjugation on Alice's measurement effects.
Pauli operators acting by conjugation correspond to a $K_4$ group action, which can be realized by relabelling Alice's measurement apparatus.
After the relabelling, the situation is simply that of the usual Bell test.

Alternatively, if we are given a quantum instance of the iterated CHSH game which achieves a CHSH-value of $2\sqrt2$ for every $N \in \NN$ and every outcome $\vec k$, then point (ii) follows quantum self-testing of CHSH-value $2\sqrt2$.
This is because the conditions on the observables $A_i, C_j$ due to self-testing imply the conditions of Lem.~\ref{lem:sufficient_st_conditions} (c.f.\ \cite{vsupic2020self}).

The above instance of the iterated CHSH game is a member of the family $\chi^{(K_4)}_{1234}$.
What is yet unclear is whether we can construct members of other families within QM.
Recall that by Wigner's Theorem \cite{wigner2012group, BargmannWigner}, a symmetry group acting on density operators must result from a projective unitary or anti-unitary representation on Hilbert space.
Anti-unitaries are not completely positive, and therefore not physically implementable.
Hence we can restrict attention to symmetry groups that allow for a projective unitary representation on the underlying Hilbert space.
As a consequence, we have

\begin{lemma}\label{lem:no_Z4}
    The family labelled by $\chi^{(\ZZ_4)}_{1234}$ cannot be realized within QM.
\end{lemma}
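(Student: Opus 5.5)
The plan is to argue by contradiction: assume a quantum instance of the iterated CHSH game satisfies Condition~\ref{cond:self-test_all_N} with $\chi_\varphi=\chi^{(\ZZ_4)}_{1234}$. I would then use quantum self-testing together with Wigner's theorem to show that the generator of $\ZZ_4$ must have character value $2$, whereas the regular character requires $0$.

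First, since $\rho(\B)>2$ for a quantum instance, the CHSH value is at most $2\sqrt2$. Condition~\ref{cond:self-test_all_N} with Lem.~\ref{lem:sufficient_st_conditions} then fixes all correlators to $a$, which I will take to be $1/\sqrt2$ (Tsirelson). I should double-check that the condition cannot hold with some $a\in(\tfrac12,\tfrac1{\sqrt2})$; if it can, I would instead use the robust form of self-testing, or restrict to the maximal case, which is the only one the paper considers in Sec.~\ref{subsec:Q_family}. Quantum self-testing \cite{vsupic2020self} then gives, up to local isometry, $\Id=\Id_2\otimes\Id_J$, $A_0=Z\otimes\Id_J$ and $A_1=X\otimes\Id_J$ on Alice's side, acting on $|\Phi^+\rangle$ tensored with a junk state. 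The effective $A$-type space $\img\Pi_\pi$, which carries $\pi\cong\varphi$, is spanned by the images of $\Id,A_0,A_1$ and of the teleported effects. I would show that it is contained in $\Span\{\Id,X,Y,Z\}\otimes\Id_J$. The reason is that $\hat\rho$ only sees the qubit part, and the corrections act by relabelling the algebra generated by $A_0,A_1$, which is $M_2(\CC)\otimes\Id_J$.

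Second, realize the generator $h$ of $H\cong\ZZ_4$ quantum mechanically. In $D_4$, the only subgroup of order $4$ that is cyclic is generated by the rotation $A_0\mapsto A_1\mapsto -A_0$. By Wigner's theorem and the exclusion of anti-unitaries, the group element realizing $h$ acts on the relevant operators as $\mathrm{Ad}_U$ for a unitary $U$. Any projective representation of a cyclic group lifts to a linear one, so $U^4=\Id$ can be arranged. Because $\mathrm{Ad}_U$ preserves the qubit algebra $M_2\otimes\Id_J$ (it maps $Z\mapsto X\mapsto -Z$), it restricts to an automorphism of $M_2$. That automorphism is conjugation by some $V\in SU(2)$, so on $\Span\{\Id,X,Y,Z\}$ it is $1\oplus O$ with $O\in SO(3)$. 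Concretely, $O$ is the rotation by $\pi/2$ about the $Y$-axis, so $Y\mapsto Y$, using $Y\propto iXZ$ and $\mathrm{Ad}_U(XZ)=(-Z)X=XZ$.

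Third, compare characters. Since $\dim\varphi=4$, the space $\img\Pi_\pi$ must be all of $\Span\{\Id,X,Y,Z\}$. On it, $\chi(h)=1+\tr O=1+(1+2\cos\tfrac\pi2)=2$. The regular character of $\ZZ_4$, however, has $\chi^{(\ZZ_4)}_{1234}(h)=0$; it would need the fourth direction to carry eigenvalue $-1$, which an anti-unitary could provide but a unitary cannot. This is the contradiction. Equivalently, the representation obtained is $\chi_1+\chi_1+\chi_{\pm i}$, which violates multiplicity one of the trivial representation in Thm.~\ref{thm:the_trace_positive_rep}.

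The main obstacle is the first step. I must rigorously justify that the effective representation space coincides with the qubit operator span $\Span\{\Id,X,Y,Z\}$ and that the abstract group element $h\in\Gee_P/\ker\vartheta$ acts there as $\mathrm{Ad}_U$, compatibly with the projection $\Pi_\pi$. I would first try to transport the self-testing isometry through $\hat\rho$ and the reduction of Sec.~\ref{subsec:elim_dofs}, and then argue that the limiting maps in $\Gee_P$ are limits of quantum channels that restrict to automorphisms of $M_2$.
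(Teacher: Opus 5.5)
Your argument has a genuine gap in scope. It rests entirely on exact CHSH self-testing, which is available only at $a=1/\sqrt2$. The lemma, however, excludes quantum members of the $\chi^{(\ZZ_4)}_{1234}$ family for every $a\in(\tfrac12,\tfrac1{\sqrt2}]$.

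Condition~\ref{cond:self-test_all_N} really can hold in QM below Tsirelson. Take $|\Phi^+\rangle$, the Bell-basis measurement for $\M$, and unsharp Pauli effects $e_0=\tfrac12(\Id+\lambda Z)$, $e_1=\tfrac12(\Id+\lambda X)$, $f_{0,1}=\tfrac12\bigl(\Id+\lambda(Z\pm X)/\sqrt2\bigr)$. Every swapping outcome yields a Bell state that Alice relabels back. The marginals therefore stay uniform and all correlators stay at $a=\lambda^2/\sqrt2$, which lies in $(\tfrac12,\tfrac1{\sqrt2})$ for $2^{-1/4}<\lambda<1$. For such $a$ the correlations do not determine the model, and robust self-testing only controls a neighbourhood of $1/\sqrt2$. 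Restricting to the maximal case proves a strictly weaker statement, so neither of your fallbacks closes this.

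Even at $a=1/\sqrt2$, your heuristic does not secure the step you flag as the main obstacle. If the junk is entangled, then $\hat\rho(E_1\otimes E_2)=\tfrac12E_1^T\otimes\hat\rho_J(E_2)$ keeps junk directions in $V_A$. Moreover, Lem.~\ref{lem:correction_map} only says that the teleportation action agrees with the relabelling $\vartheta$ when paired with Charlie's CHSH effects. It does not say that it maps $Z\otimes\Id_J$ to $X\otimes\Id_J$ as operators.

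The missing idea is that you do not need to identify the effective space through self-testing at all. The paper uses only the Wigner premise you already invoke:
\begin{enumerate}
\item The $\ZZ_4$ action comes from a projective unitary representation, which for a cyclic group is equivalent to a linear one.
\item A unitary representation of an abelian group has a common eigenbasis $\{|\psi_i\rangle\}$ with at least two elements, since a one-dimensional local system cannot violate CHSH.
\item The projectors $|\psi_i\rangle\langle\psi_i|$ are then linearly independent invariant vectors in the representation on density matrices.
\item Hence the trivial representation occurs at least twice, contradicting Thm.~\ref{thm:the_trace_positive_rep}~(2).
\end{enumerate}
Like the paper, this treats the effective $\ZZ_4$ action as a symmetry of the local quantum state space; that is the content of the Wigner premise. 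With that premise, the argument is independent of $a$ and of the local dimension. Your explicit computation is the qubit instance of it: the two copies of $\chi_1$ you find are spanned by $\tfrac12(\Id\pm Y)$, the eigenprojectors of the $\pi/2$ rotation about $Y$.
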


\begin{proof}
	Assume for the sake of reaching a contradiction that it is possible to realize members of the $\chi^{(\ZZ_4)}_{1234}$ family within quantum theory.
    Then the corresponding representation of $\ZZ_4$ must come from a projective unitary representation on Hilbert space.
	Every projective representation of a cyclic group is projectively equivalent to a linear representation \cite[Theorem~2.3.1]{Karpilovsky1985} (originally \cite{schur1904}). 
	In particular, the linear representation on Hilbert space is abelian, and thus admits a common eigenbasis $\{ |\psi_i\rangle \}_i$.
	Hence, at the level of density matrices, the projections $|\psi_i\rangle\langle\psi_i|$ are invariant under the $\ZZ_4$ action.
	The basis has at least two elements, because no Bell inequalities can be violated with a one-dimensional local Hilbert space.
    This implies that the trivial representation has multiplicity at least two in the decomposition of the representation on the density matrices, contradicting the fact that the trivial representation has multiplicity 1 (Thm.~\ref{thm:the_trace_positive_rep}~(2)).
\end{proof}

Since the group $K_4$ is also abelian, it's trivial projective class would be subject to the same restrictions as $\ZZ_4$.
However, $K_4$ also admits a non-trivial projective class.
This non-trivial projective class is represented on Hilbert space by the Pauli matrices (c.f.\ \cite[Section~3.6]{Karpilovsky1985}), and accounts for the family $\chi^{(K_4)}_{1234}$.

It is also possible to realize the family $\chi^{(D_4)}_{125}$ QM.
However, since the corresponding representation is four-dimensional and the group $D_4$ has order eight, 
this family cannot be realized as a basis measurement, in contrast to $\chi^{(K_4)}_{1234}$. 
However it is possible to realize it as a POVM.
Let $| b_k \rangle = \sigma_k \otimes \Id |\Phi^+\rangle$ and $| a_k \rangle = S\sigma_k \otimes \Id |\Phi^+\rangle$, where $S$ is the phase operator with $S^2 = \sigma_z = \sigma_3$.
Then the choosing $\M$ to be the following POVM
\begin{align*}
    \M := 
    \big\{ \tfrac12 | b_k \rangle\langle b_k | \big\}_{k = 0, \dots, 3}
    \cup
    \big\{ \tfrac12 | a_k \rangle\langle a_k | \big\}_{k = 0, \dots, 3},
\end{align*} 
along with the Bell state and the Pauli measurements of the standard Bell test (noted in the beginning of this section, Sec.~\ref{subsec:Q_family}), realizes a member of $\chi^{(D_4)}_{125}$.
The question of whether the rest of the families have a quantum realization is left open.


\subsection{GPT pancakes}\label{subsec:no_all_pancakes}

A conceptual consequence of Result~\ref{res:classification} is that iterated entanglement swapping games can give an operational justification for constructing GPTs that violate the \emph{local tomography} postulate.

Local tomography is a commonly made assumption \cite{hardy2001quantumtheoryreasonableaxioms,hardy2013reconstructingquantumtheory,barrett2007information,chiribella2017quantum,Muller2021reconstruction} (also discussed in \cite{barnum2014localtomography,lismer2025experimentaltestprincipletomographic,Galley2018anymodificationof}) that states that in a probabilistic theory, the product effects should span the space of all effects.
In this case, a state can be characterized from the result of product measurements alone, hence the name.

The effective I-CHSH GPT (Def.~\ref{def:eff_I-CHSH_GPT}) and all the examples constructed in App.~\ref{App:family_members} violate this principle.
Their local effect space is three-dimensional (spanned by the observables required to implement CHSH tests),
but the characterization in Result~\ref{res:classification} lists no solution with three local dimensions.
Intuitively, the reason is that the space spanned by the tensor product of the effects needed to specify the CHSH experiment 
is ``too small'' to support the bipartite effects that are used in entanglement swapping.
Thus the violation of local tomography is justified by the operational requirement of being able to perform entanglement swapping.

In the context of qubit quantum mechanics, there is a theorem, colloquially called the ``no-pancake theorem'' \cite{BETHRUSKAI2002159, nopancake2005kohout}, that states that there is no completely positive map whose domain is the Bloch ball and whose image is contained in a plane (i.e., a ``pancake'').
The elements of the teleportation semigroup of a GPT are required to be completely positive on the GPT by consistency.
Thus in analogy to the QM case, under the assumption of local tomography, we can conclude a similar result:

\begin{lemma}[GPT no-pancake theorem]\label{lem:GPT_no_pancake}
    In any locally tomographic GPT, which contains at least one instance of the iterated CHSH game satisfying Condition~\ref{cond:self-test_all_N}, the directional vector space of the image of the local state space under the teleportation semigroup of the GPT must be of dimension at least three.
\end{lemma}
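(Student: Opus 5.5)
The plan is to reduce the claim to the dimension count in Result~\ref{res:classification}, using local tomography to identify the abstract representation space $V_C^*$ with the span of the genuine local state space.

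First, fix an instance $\rho,\phi_k,e_i,f_j$ satisfying Condition~\ref{cond:self-test_all_N}, and perform the reduction of Sec.~\ref{subsec:elim_dofs}.
Under local tomography, bipartite states are determined by their values on product effects. I would use this to argue that the reduced space $V_C^*$ is spanned by local states, i.e.\ $\S^{(1)}$ restricted to $V_C$ is generating. The image of $\hat\rho$ consists of the functionals $\rho(e,\cdot)$, which are (unnormalized) conditional local states, so $\img\hat\rho\subseteq\Span(\S^{(1)}|_{V_C})$. Conversely, every local state is acted on by the teleportation maps $R_k:V_C^*\to V_C^*$. Hence the teleportation semigroup acts on a space in which local states are generating.

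Second, consider the set
\[
T := \bigcup_{s\in S} s\,\S^{(1)},
\]
namely the image of the local state space under the (normalized) teleportation semigroup, together with its closure under $\Ess$. Every $g\in\Gee_P$ is a limit of elements of $S$. Since the state space is closed, $\bar T$ contains $g\sigma$ for all $g\in\Gee_P$ and $\sigma\in\S^{(1)}$, and in particular it contains $P\sigma$. Since $\S^{(1)}$ spans $V_C^*$, this gives
\[
\Span(\bar T)\supseteq \img P\supseteq \img\Pi_\varphi .
\]
Theorem~\ref{thm:the_trace_positive_rep} and Result~\ref{res:classification} give $\dim\img\Pi_\varphi=\dim\varphi\in\{4,6,8\}$, so $\dim\Span(\bar T)\geq 4$. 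Passing from $\bar T$ back to $T$ does not lower the dimension of the span, because spans of finite-dimensional sets are closed.

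Third, all elements of $T$ are normalized, so they lie in the affine hyperplane $\{\sigma:\sigma(\1)=1\}$, which does not contain the origin. The span of a set in such a hyperplane has dimension one more than the direction space of its affine hull. Therefore the directional vector space of $\operatorname{aff}(T)$ has dimension at least $4-1=3$.

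The main obstacle is the first step: making precise that local tomography forces $\S^{(1)}$ to generate the reduced space $V_C^*$ on which $\Gee_P$ acts. Without local tomography, the local states could sit in a three-dimensional span (a planar ``pancake'' with two-dimensional direction space), while the four-dimensional representation lives only on bipartite degrees of freedom. My first attempt would be to show that local tomography makes $\hat\rho$ and the bipartite pairings factor entirely through product effects, so that $V_C$ embeds into the span of local effects. Dually, $V_C^*$ is then spanned by marginals of local states.
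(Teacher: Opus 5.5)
Your proposal is correct and follows essentially the same route as the paper's informal justification for this lemma. Under local tomography, the span of the teleported local states contains the representation space $\img\Pi_\varphi$, which has dimension at least $4$ by Result~\ref{res:classification}, and normalization then removes one dimension. Two small remarks. First, the step you flag as the main obstacle is already settled by your own observation: under local tomography the conditional states $\rho(e,\cdot)$ are (unnormalized) local states, and they span $\img\hat\rho$. Second, your claim that every element of $T$ is normalized is unjustified, since $p_{\vec k}$ is the outcome probability only for the particular input $\rho(\1,\cdot)$. That claim is also unnecessary, because $\dim\operatorname{aff}(X)\geq\dim\Span(X)-1$ holds for any set $X$.
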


In particular, Lem.~\ref{lem:GPT_no_pancake} tells us that there are no locally tomographic GPTs with local dimension three or less (recall that the local dimension is one more than the dimension of the directional vector space of the state space).
This gives us an alternative proof of the fact that it is not possible to perform entanglement swapping in boxworld theory, or any of the regular polygon GPTs of \cite{janotta2011limits}.


\section{Conclusions and Outlook}

The initial motivation of this work came from Refs.~\cite{weilenmann2020self, weilenmann2020towards}.
Their premise is that ``sustaining a CHSH value after teleportation'' is a rare property for probabilistic theories to have -- potentially so rare that an operational characterization of quantum correlations could be based on them.
In Ref.~\cite{dmello2024entanglement}, we provided initial evidence against this conjecture, by constructing a post-quantum GPT with this property.

In the present work, we show that the requirement that a theory sustains its CHSH value after teleportation does highly constrain its structure.
Maybe surprisingly, this invariance property gives rise to a representation-theoretic condition, all solutions of which can be explicitly enumerated.

In this sense, the property of Refs.~\cite{weilenmann2020self, weilenmann2020towards} \emph{is} rare after all -- just not sufficiently so to single out QM without further assumptions.

Beyond that,
our ``seven-fold way'' classification
provides an operational justification for rejecting local tomography
as an axiom for probabilistic theories:
The space spanned by the product effects needed to realize the CHSH test is insufficient to support the effects required for entanglement swapping.

One of the tools required to obtain our results was a notion of self-testing in the framework of GPTs, which we have also introduced in this work.
GPT self-testing is a generalization of quantum self-testing to the GPT framework.
It involves making uniqueness statements about the theory obtained by discarding all degrees of freedom that are irrelevant to the situation at hand.
It would be interesting to explore this approach further, and identify other scenarios where GPTs can be self-tested.

Finally, we have left open the question exactly which of the seven family members can be realized within QM.


\section{Acknowledgements}

We are grateful to 
Marc-Olivier Renou for motivating this project, 
Johan {\AA}berg for all the great questions and feedback throughout the course of this work, 
Joachim Krug for providing insights on semigroups, 
and all of Elie Wolfe, Rob Spekkens, Lucas Tendrik, Markus Methlinger, Martin Renner, Nicolas Brunner, Pavel Sekatski, Sadra Boreiri and Xiangling Xu for helpful discussion. 

Parts of this work was conducted while one of us (LD) was kindly hosted by the
Perimeter Institute for Theoretical Physics. 
We acknowledge support by Germany’s Excellence Strategy -- Cluster of
Excellence Matter and Light for Quantum Computing (ML4Q), EXC 2004/1
(390534769).


\bibliography{bibliography}


\onecolumngrid

\section{Appendix}

\subsection{The dihedral group of four elements}\label{app:D_4}

The content of this appendix is based on the references \cite{Isaacs1976}, \cite{Simon1996}, and \cite{Robinson1996}.
The details were computed by hand.

The dihedral group of four elements $D_4$, is the symmetry group of the square.
It is realized as the semi-direct product of two cyclic groups, namely
\begin{align*}
    D_4 = \langle \xi \rangle \rtimes \langle \eta \rangle \cong \ZZ_4 \rtimes \ZZ_2.
\end{align*}
Where $\xi^4 = \Id$ and $\eta^2 = \Id$, and $\eta \xi^k \eta = \xi^{-k}$.
What will be an important feature in our discussion is that $D_4$ is isomorphic to the wreath product of $\ZZ_2$ with itself, i.e.,
\begin{align*}
    D_4 \cong \ZZ_2 \wr \ZZ_2.
\end{align*}
The conjugacy classes($\Kay$) and corresponding centralizers($\Cee$) of $D_4$ are as follows:
\begin{gather*}
    \Kay_{\Id} = \{\Id\}, \quad
    \Kay_{\xi}  = \{\xi, \xi^3\}, \quad
    \Kay_{\xi^2}  = \{\xi^2\}, \quad
    \Kay_{\eta}  = \{\eta, \xi^2\eta\}, \quad
    \Kay_{\xi \eta}  = \{\xi\eta, \xi^3\eta\}.
    \\[1em]
    \Cee_{\Id} = D_4, \quad 
    \Cee_{\xi} = \langle \xi \rangle, \quad 
    \Cee_{\xi^2} = D_4, \quad
    \Cee_{\eta} = \langle \{ \xi^2, \eta \} \rangle, \quad 
    \Cee_{\xi\eta} =  \langle \{\xi^2, \xi\eta\} \rangle. 
\end{gather*}
There are $5$ conjugacy classes.
This means the number of representative irreps of the group algebra $\CC[D_4]$ are $|\Em(\CC[D_4])| = 5$.
Also the dimension of the group algebra $\dim \CC[D_4] = 8$.
Therefore we have $\Em(\CC[D_4]) = \{M_i\}_{i = 1, \cdots, 5.}$, and
\begin{align*}
    \sum_{i = 1}^{5} (\dim M_i)^2 = 8.
\end{align*}
The only solution is $1^2 + 1^2 + 1^2 + 1^2 + 2^2$, i.e., there is one $2D$ irrep and four linear irreps.
This information is sufficient to deduce the \emph{character table} of $D_4$, which is as follows:
\begin{table}[H]
    \centering
    \renewcommand{\arraystretch}{1.5}
    \begin{tabular}{|c|ccccc|c|}
        \hline
        $\Kay_{D_4}$ & $\Kay_{\Id}$ & $\Kay_{\xi}$ & $\Kay_{\xi^2}$ & $\Kay_{\eta}$ & $\Kay_{\xi\eta}$ & \multirow{3}{*}{irrep} \\
        \cline{1-6}
        $|\Kay_{D_4}|$ & $1$ & $2$ & $1$ & $2$ & $2$ & \\
        \cline{1-6}
        $|\Cee_{D_4}|$ & $8$ & $4$ & $8$ & $4$ & $4$ & \\
        \hline 
        $\chi_1$ & $1$ & $1$ & $1$ & $1$ & $1$ & trivial \\
        $\chi_2$ & $1$ & $1$ & $1$ & $-1$ & $-1$ & $D_4/\ZZ_4$ \\
        $\chi_3$ & $1$ & $-1$ & $1$ & $1$ & $-1$ & $(-1)^{\pi(g)}$ \\
        $\chi_4$ & $1$ & $-1$ & $1$ & $-1$ & $1$ & $D_4/\ZZ_4 \otimes (-1)^{\pi(g)}$ \\
        $\chi_5$ & $2$ & $0$ & $-2$ & $0$ & $0$ & planar roto-reflections \\[1ex]
        \hline
    \end{tabular}
    \caption{The character table of $D_4$}
    \label{tab:D_4_ch_tab}
\end{table}
The subgroups of $D_4$ are
\begin{align*}
    \O(1)
    &
    :\Id;
    \\
    \O(2)
    & 
    :\langle \xi^2 \rangle, 
    \langle \eta \rangle,
    \langle \xi\eta \rangle,
    \langle \xi^2\eta \rangle,
    \langle \xi^3\eta \rangle;
    \\
    \O(4)
    &
    :\langle \xi \rangle,
    \langle \{ \xi^2, \eta\} \rangle,
    \langle \{\xi^2, \xi\eta \} \rangle;
    \\
    \O(8)
    &
    :D_4.
\end{align*}

Out of these the order $4$ subgroups, namely $K_4$ and $\ZZ_4$, are of interest.
Their character tables are as follows:
\begin{table}[H]
    \centering
    \begin{minipage}{0.45\textwidth}
        \centering
        \renewcommand{\arraystretch}{1.5}
        \begin{tabular}{|c|cccc|}
            \hline
            $\Kay_{K_4}$ & $\Kay_{\Id}$ & $\Kay_{\xi^2}$ & $\Kay_{\eta}$ & $\Kay_{\xi^2\eta}$ \\
            \hline 
            $\chi_1$ & $1$ & $1$ & $1$ & $1$ \\
            $\chi_2$ & $1$ & $1$ & $-1$ & $-1$ \\
            $\chi_3$ & $1$ & $-1$ & $1$ & $-1$ \\
            $\chi_4$ & $1$ & $-1$ & $-1$ & $1$ \\[1ex]
            \hline
        \end{tabular}
        \caption{The character table of $K_4$}
        \label{tab:K_4_ch_tab}
    \end{minipage}
    \hfill
    \begin{minipage}{0.45\textwidth}
        \centering
        \renewcommand{\arraystretch}{1.5}
        \begin{tabular}{|c|cccc|}
            \hline
            $\Kay_{\ZZ_4}$ & $\Kay_{\Id}$ & $\Kay_{\xi}$ & $\Kay_{\xi^2}$ & $\Kay_{\xi^3}$ \\
            \hline 
            $\chi_1$ & $1$ & $1$ & $1$ & $1$ \\
            $\chi_2$ & $1$ & $i$ & $-1$ & $-i$ \\
            $\chi_3$ & $1$ & $-i$ & $-1$ & $i$ \\
            $\chi_4$ & $1$ & $-1$ & $1$ & $-1$ \\[1ex]
            \hline
        \end{tabular}
        \caption{The character table of $\ZZ_4$}
        \label{tab:Z_4_ch_tab}
    \end{minipage}
\end{table}


\subsection{Appendix to Sec.~\ref{subsec:GPTs}}\label{app:app_to_formalism}

\begin{lemma*}
    The state space $\S^{(n)}$ is a compact set.
\end{lemma*}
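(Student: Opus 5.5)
The proof will show that $\S^{(n)}$ is closed and bounded in the finite-dimensional space $\Span(\P^{(n)})^*$, and then invoke Heine--Borel. Throughout, $W := \Span(\P^{(n)})$, so that $\D^{(n)} \subseteq (\P^{(n)})' \subseteq W^*$.

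\emph{Closedness.} The normalization condition $\rho(\1^{\otimes n}) = 1$ cuts out an affine hyperplane of $W^*$, which is closed. The state space $\S^{(n)}$ is the intersection of this hyperplane with $\D^{(n)}$. I take the state cone to be closed; if not, I replace it by its closure, which still lies in the closed polar dual $(\P^{(n)})'$ and so consists of positive functionals. Under this convention $\S^{(n)}$ is closed.

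\emph{Boundedness.} This is the main step, and it rests on $\1^{\otimes n}$ being an interior point of $\P^{(n)}$ relative to $W$.

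1. For $n=1$, this holds because $\1$ lies in the relative interior of $\P^{(1)}$ by assumption.

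2. For general $n$, I would argue as follows. Since $\P^{(1)}$ is generating on $\Span(\P^{(1)})$, products of local effects span the product part of $W$. Each such product $e_1 \otimes \cdots \otimes e_n$ is dominated by a multiple of $\1^{\otimes n}$: write $\1 = \epsilon^{-1} e_i + (\text{positive})$ for each factor and expand. If the cones are not assumed to contain these products, I would simply assume that $\1^{\otimes n}$ lies in the relative interior of $\P^{(n)}$, which is the standard convention in Ref.~\cite{dmello2024entanglement}. I expect this point, whether it is derived or assumed, to be the only real obstacle.

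3. Relative interiority gives a basis $\{w_i\}$ of $W$ and an $\epsilon > 0$ such that $\1^{\otimes n} \pm \epsilon w_i \in \P^{(n)}$ for all $i$.

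4. For any $\rho \in \S^{(n)}$, positivity of $\rho$ on $\P^{(n)}$ applied to these elements gives
\begin{align*}
0 \le \rho(\1^{\otimes n} \pm \epsilon w_i) = 1 \pm \epsilon\,\rho(w_i),
\end{align*}
hence $|\rho(w_i)| \le 1/\epsilon$ for every $i$.

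5. A functional on $W$ is determined by its values on a basis, so $\S^{(n)}$ sits inside a bounded box in $W^*$.

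Closed and bounded in the finite-dimensional space $W^*$, the set $\S^{(n)}$ is compact.
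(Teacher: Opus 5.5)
Your proof is correct under the same standing hypotheses the paper uses: $\D^{(n)}$ is closed, and $\1^{\otimes n}$ lies in the relative interior of $\P^{(n)}$. The paper simply asserts both ``by definition''. Your overall skeleton also matches the paper's: closed plus bounded in $\Span(\P^{(n)})^*$, then Heine--Borel.

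The routes differ in how boundedness is obtained from interiority.
\begin{itemize}
\item The paper argues by contradiction that $\1^{\otimes n}$ is strictly positive on $\D^{(n)}\setminus\{0\}$. It then uses closedness of $\D^{(n)}$ to make $\D^{(n)}\cap S_1$ compact, takes the positive minimum $\lambda_{\mathrm{min}}$ of $\1^{\otimes n}$ there, and concludes $\S^{(n)}\subseteq B_{1/\lambda_{\mathrm{min}}}$.
\item You use interiority directly as an order-unit property: $\1^{\otimes n}\pm\epsilon w_i\in\P^{(n)}$ gives $|\rho(w_i)|\le 1/\epsilon$ coordinatewise.
\end{itemize}

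Your route is shorter and more elementary. It needs no contradiction argument and no compactness of a sphere slice, and it gives an explicit constant. Its boundedness half never uses closedness of $\D^{(n)}$: it bounds the entire normalized slice of $(\P^{(n)})'$.

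The paper's route could in principle run on a weaker hypothesis, namely strict positivity of $\1^{\otimes n}$ on the chosen state cone rather than interiority in $\P^{(n)}$. It pays for this with closedness of $\D^{(n)}$. Since the paper derives strict positivity from interiority anyway, both arguments rest on the same assumption here.

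Two small remarks. First, your ``replace $\D^{(n)}$ by its closure'' fallback would change the set being claimed compact. It is cleaner to state closedness as a hypothesis, as the paper does.

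Second, your step~2 derivation of interiority from $\1$ lying in the relative interior of $\P^{(1)}$ fails in exactly the setting this paper emphasizes. When local tomography fails ($\Span(\P^{(1)})\subsetneq V$ but $\Span(\P^{(2)})=V^{\otimes 2}$), products of local effects span only a proper subspace of $W$, and nothing controls the remaining directions. So your fallback assumption is what actually carries the argument, just as in the paper.
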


\begin{proof}
    The definition of $\D^{(n)}$ is such that the cone lies entirely in $\Span(\P^{(n)})^*$, where it is a closed, pointed, generating, convex cone (see Sec.~\ref{subsec:GPTs}).
    Thus it is sufficient for the sake of this proof to restrict the ambient spaces of $\D^{(n)}$ and $\P^{(n)}$ to be $\Span(\P^{(n)})^*$ and $\Span(\P^{(n)})$ respectively, and thus assume $\D^{(n)}$ and $\P^{(n)}$ to be closed, pointed, generating, convex cones (see footnotes in Sec.~\ref{subsec:GPTs} for definitions).

    Since we are in finite dimensions it is sufficient to show that $\S^{(n)}$ is closed and bounded.
    The cone $\D^{(n)}$ is closed by definition (Sec.~\ref{subsec:GPTs}) and $\1^{\otimes n}$ is a continuous linear functional.
    $\S^{(n)}$ arises as the intersection of $\D^{(n)}$ and the pre-image of $1$ under $\1^{\otimes n}$ (c.f.\ definition in Sec.~\ref{subsec:GPTs}) and thus is closed.

    By definition $\1^{\otimes n}$ is in the relative interior of $\P^{(n)} \subset (\D^{(n)})'$ (the polar dual of $\D^{(n)}$, see footnote in Sec.~\ref{subsec:GPTs}).
    Therefore, taking the ambient space to be $\Span(\P^{(n)})$ implies that $\1^{\otimes n}$ is in fact in the interior of $\P^{(n)}$.
    As a consequence, it is a strictly positive functional on $\D^{(n)}$, i.e., it is positive on $\D^{(n)} \setminus \{0\}$.
    Assume for the sake of reaching a contradiction that $\1^{\otimes n}$ is positive but not strictly positive on $\D^{(n)}$.
    Then, there exists some $\sigma \in \D^{(n)}$ such that $\sigma(\1^{\otimes n}) = 0$.
    Since $\1^{\otimes n}$ is a continuous functional in the interior of $(\D^{(n)})'$, there exists some perturbation of $p$ of $\1^{\otimes n}$ in the interior of $(\D^{(n)})'$ such that $\sigma(p) < 0$.
    This contradicts the fact that $p \in (\D^{(n)})'$.
    
    Now, let $S_1$ be the unit sphere with respect to the Euclidean norm.
    Because $\D^{(n)}$ is closed, the intersection $\D^{(n)} \cap S_1$ is a compact set.
    Let $\lambda_{\mathrm{min}} > 0$ be the minimum value attained by $\1^{\otimes n}$ on $\D^{(n)} \cap S_1$.
    Then the norm ball $B_{1/\lambda_{\mathrm{min}}}$ of radius $1/\lambda_{\mathrm{min}}$ contains $\S^{(n)}$.
    Thus $\S^{(n)}$ is compact.
\end{proof}


\subsection{Action of the relabelling group on the CHSH observables}\label{app:action_on_chsh_obs}

Denote the standard CHSH observable by 
\begin{align*}
    A_0B_0 + A_0B_1 + A_1B_0 - A_1B_1 \equiv (+ \ + \ + \ -).
\end{align*}
Using this convention we can denote all CHSH observables by
\begin{align*}
    \{ \pm \pi (+ \ + \ + \ -) \ | \ \pi \in S_4 \}.
\end{align*}
For example, exchanging Alice's settings yields the following
\begin{align*}
    &
    A_1B_0 + A_1B_1 + A_0B_0 - A_0B_1
    \equiv A_0B_0 - A_0B_1 + A_1B_0 + A_1B_1
    \equiv ( + \ - \ + \ + ).
\end{align*}
Now, we can define the action of the relabeling group on the CHSH correlators as follows
\begin{align*}
    \neg^{(0)}A_0 = \neg e_0 - e_0 = -A_0, \qquad (s \mapsto \overline s) B_1 = f_0 - \neg f_0 = B_0.
\end{align*}
Using this we can define a group action on the CHSH observable. 
For example,
\begin{align*}
    \neg^{(0)} \otimes \Id (+ \ + \ + \ -) = (- \ - \ + \ -), \qquad \Id \otimes (s \mapsto \overline s)(+ \ + \ + \ -) = (+ \ + \ - \ +)
\end{align*}
Choosing the following realization of $D_4$
\begin{equation}
    \xi := (s \leftrightarrow \overline s)(o \leftrightarrow -o)^{(0)}
    \quad 
    \text{and}
    \quad
    \eta := (o \leftrightarrow -o)^{(0)} \xi,
    \quad 
    \text{with}
    \quad
    D_4 = \langle \xi \rangle \rtimes \langle \eta \rangle,
\end{equation}
we get
\begin{align*}
    e_0
    \overset{\xi}{\mapsto}
    \neg e_1
    \overset{\xi}{\mapsto}
    \neg e_0 
    \overset{\xi}{\mapsto}
    e_1
    \overset{\xi}{\mapsto}
    e_0;
    \qquad
    e_0
    \overset{\eta}{\mapsto}
    \neg e_1;
    \qquad
    e_1
    \overset{\eta}{\mapsto}
    \neg e_0;
    \qquad
    \1 \overset{D_4}{\mapsto} \1;
    \\
    f_0
    \overset{\xi}{\mapsto}
    \neg f_1
    \overset{\xi}{\mapsto}
    \neg f_0 
    \overset{\xi}{\mapsto}
    f_1
    \overset{\xi}{\mapsto}
    f_0;
    \qquad
    f_0
    \overset{\eta}{\mapsto}
    \neg f_1;
    \qquad
    f_1
    \overset{\eta}{\mapsto}
    \neg f_0;
    \qquad
    \1 \overset{D_4}{\mapsto} \1.
\end{align*}
This corresponds the following action on the correlators
\begin{gather*}
    A_0 
    \overset{\xi_A}{\mapsto} 
    -A_1
    \overset{\xi_A}{\mapsto} 
    -A_0
    \overset{\xi_A}{\mapsto}
    A_1 
    \overset{\xi_A}{\mapsto}
    A_0;
    \qquad
    A_0
    \overset{\eta_A}{\mapsto}
    -A_1;
    \qquad
    A_1
    \overset{\eta_A}{\mapsto}
    -A_0;
    \\
    B_0 
    \overset{\xi_B}{\mapsto} 
    -B_1
    \overset{\xi_B}{\mapsto} 
    -B_0
    \overset{\xi_B}{\mapsto}
    B_1 
    \overset{\xi_B}{\mapsto}
    B_0;
    \qquad
    B_0
    \overset{\eta_B}{\mapsto}
    -B_1;
    \qquad
    B_1
    \overset{\eta_B}{\mapsto}
    -B_0;
\end{gather*}
And finally, the following action on the CHSH observables
\begin{align*}
    (\mathrm{Alice}) :
    &
    (+ \ + \ + \ -) 
    \overset{\xi_A}{\mapsto} 
    (+ \ - \ - \ -) 
    \overset{\xi_A}{\mapsto} 
    (- \ - \ - \ +) 
    \overset{\xi_A}{\mapsto} 
    (- \ + \ + \ +) 
    \overset{\xi_A}{\mapsto} 
    (+ \ + \ + \ -);
    \\
    &
    (+ \ + \ + \ -) 
    \overset{\eta_A}{\mapsto}
    (- \ + \ - \ -)
    \overset{\xi_A}{\mapsto}
    (- \ - \ + \ -)
    \overset{\xi_A}{\mapsto}
    (+ \ - \ + \ +)
    \overset{\xi_A}{\mapsto}
    (+ \ + \ - \ +).
    \\[2ex]
    (\mathrm{Bob}) :
    &
    (+ \ + \ + \ -) 
    \overset{\xi_B}{\mapsto} 
    (+ \ - \ - \ -) 
    \overset{\xi_B}{\mapsto} 
    (- \ - \ - \ +) 
    \overset{\xi_B}{\mapsto} 
    (- \ + \ + \ +) 
    \overset{\xi_B}{\mapsto} 
    (+ \ + \ + \ -);
    \\
    &
    (+ \ + \ + \ -) 
    \overset{\eta_B}{\mapsto}
    (- \ - \ + \ -)
    \overset{\xi_B}{\mapsto}
    (- \ + \ - \ -)
    \overset{\xi_B}{\mapsto}
    (+ \ + \ - \ +)
    \overset{\xi_B}{\mapsto}
    (+ \ - \ + \ +).
\end{align*}


\subsection{Representatives of each of the seven families}\label{App:family_members}

Here we provide an explicit member of each of the seven families of GPTs.

(1) For the $\ZZ_4$ representation we will take the correction group to be generated by 
\begin{align*}
    \xi 
    =
    \begin{pmatrix}
        0 & 1 & & \\
        -1 & 0 & & \\
         & & -1 & \\
         & & & 1 \\
    \end{pmatrix}.
\end{align*}
For the local effects, set $r = \sqrt{a \sqrt 2}$, where $a \in (\tfrac12, 1]$ and $4a$ is the CHSH value. 
Then for Alice choose
\begin{gather*}
    \1 = \big( 0 \ 0 \ 0 \ 1 \big)^T, \quad
    e_0 = \tfrac12\big( r \ 0 \ 0 \ 1 \big)^T, \quad
    e_1 = \tfrac12\big( 0 \ r \ 0 \ 1 \big)^T.
\end{gather*}
The effects of Charlie are defined in terms of a $\pi/4$ rotation, i.e., 
\begin{align*}
    f_0 = R e_1, \quad 
    f_1 = R e_0, \quad 
    \text{with} \quad 
    R 
    =
    \begin{pmatrix}
        \tfrac1{\sqrt2} & \tfrac1{\sqrt2} & & \\
        -\tfrac1{\sqrt2} & \tfrac1{\sqrt2} & & \\
         & & 1 & \\
         & & & 1 \\
    \end{pmatrix}.
\end{align*}

Let $\gamma : v \mapsto \langle v, \cdot \rangle$ be the isomorphism induced by the Euclidean inner product.
Define $\rho$ by 
\begin{align}
    \rho(e \otimes f) := \gamma(e)(f),
\end{align}
that is, $\hat\rho = \gamma$.
This gives us 
\begin{gather*}
    \rho(\1 \otimes \1) = 1, \quad
    \rho(e_i \otimes \1) = \rho(\1 \otimes f_j) = \tfrac12, \quad 
    \text{and} \quad
    \rho(e_i \otimes f_j) = \tfrac14\big(1 + (-1)^{ij} a \big).
\end{gather*}
Finally, for the measurement of the Bobs choose 
\begin{align*}
    \M = \{ \hat\phi_k := \tfrac14 \xi^k \gamma^{-1} \ | \ k = 0,1,2,3 \}.
\end{align*}

The construction for the rest of the solution is almost identical.
In fact or all other other $\dim = 4$ representations, the only difference is the matrices that generate the group. 
The rest -- choice of $\rho, e_i, f_j$ -- are identical. 

(2) For $K_4$ the group is generated by the matrices
\begin{align*}
    \xi^2
    =
    \begin{pmatrix}
        -1 &  & & \\
         & -1 & & \\
         & & 1 & \\
         & & & 1 \\
    \end{pmatrix}
    \quad \text{and} \quad
    \eta 
    =
    \begin{pmatrix}
        -1 &  & & \\
         & 1 & & \\
         & & -1 & \\
         & & & 1 \\
    \end{pmatrix}.
\end{align*}
And
\begin{align*}
    \M = \big\{ \hat\phi_g := \tfrac14 g \gamma^{-1} \ | \ g \in \langle \{ \xi^2, \eta \}\rangle \big\}.
\end{align*}

(3) For $\chi_{125}^{(D_4)}$, $\chi_{135}^{(D_4)}$, and $\chi_{145}^{(D_4)}$, the groups are generated respectively by 
\begin{align*}
    \xi_{125}
    =
    \begin{pmatrix}
         0 & 1 & & \\
         -1 & 0 & & \\
         & & 1 & \\
         & & & 1 \\
    \end{pmatrix},
    \quad &\text{and} \quad
    \eta_{125} 
    =
    \begin{pmatrix}
        -1 &  & & \\
         & 1 & & \\
         & & -1 & \\
         & & & 1 \\
    \end{pmatrix};
    \\
    \xi_{135}
    =
    \begin{pmatrix}
         0 & 1 & & \\
         -1 & 0 & & \\
         & & -1 & \\
         & & & 1 \\
    \end{pmatrix},
    \quad &\text{and} \quad
    \eta_{135} 
    =
    \begin{pmatrix}
        -1 &  & & \\
         & 1 & & \\
         & & 1 & \\
         & & & 1 \\
    \end{pmatrix};
    \\
    \xi_{145}
    =
    \begin{pmatrix}
         0 & 1 & & \\
         -1 & 0 & & \\
         & & -1 & \\
         & & & 1 \\
    \end{pmatrix},
    \quad &\text{and} \quad
    \eta_{145} 
    =
    \begin{pmatrix}
        -1 &  & & \\
         & 1 & & \\
         & & -1 & \\
         & & & 1 \\
    \end{pmatrix}.
\end{align*}
Again with 
\begin{align*}
    \M = \big\{ \hat\phi_g := \tfrac18 g \gamma^{-1} \ | \ g \in D_4 \big\}.
\end{align*}

(4) For $\chi_{12345}^{(D_4)}$ and $\chi_{12345^2}^{(D_4)}$ we have to slightly modify the local effects, namely we pad them with enough zeros to make elements of $\RR^6$ and $\RR^8$ respectively.
For example, for $\chi_{12345}^{(D_4)}$ we have:
\begin{gather*}
    \1 = \big( 0 \ 0 \ 0 \ 0 \ 0 \ 1 \big)^T, \quad
    e_0 = \tfrac12\big( r \ 0 \ 0 \ 0 \ 0 \ 1 \big)^T, \quad
    e_1 = \tfrac12\big( 0 \ r \ 0 \ 0 \ 0 \ 1 \big)^T.
\end{gather*}
The matrix $R$ is padded with $1$ on the diagonal to make the dimensions match in order to obtain the $f_j$.
The matrices generating the groups follow the same block diagonal pattern (which can be read off of the character table).
The bipartite state $\hat\rho$ is once again taken to be $\gamma$.
The $\hat\phi_g$ are also chosen the same way.

In all the above examples, we have not specified any effects outside of those required for the CHSH test.
And indeed one can deduce that all these GPTs are \emph{not} locally tomographic.
Of course, they can all be completed to a locally tomographic GPT in a simple way -- pick any ONB $\{w_i\}_i$ of the orthocomplement of the space spanned by the specified local effects, and add the effects $\{ \tfrac12 (w_i \oplus \1) \}_i$ as well as their negations.


\end{document}